\newtheorem{theorem}{Theorem}[section]
\newtheorem{proposition}[theorem]{Proposition}
\newtheorem{lemma}[theorem]{Lemma}
\newtheorem{remark}[theorem]{Remark}
\newtheorem{conjecture}[theorem]{Conjecture}
\title{Optimally Guarding 2-Reflex Orthogonal Polyhedra by Reflex Edge Guards}
\author{Giovanni Viglietta\thanks{JAIST, 1-1 Asahidai, Nomi-shi, Ishikawa, Japan. {\tt johnny@jaist.ac.jp}.}}
\begin{document}
\thispagestyle{empty}
\maketitle 

\begin{abstract}
Let an \emph{orthogonal polyhedron} be the union of a finite set of boxes in $\mathbb R^3$ (i.e., cuboids with edges parallel to the coordinate axes), whose surface is a connected 2-manifold. We study the NP-complete problem of guarding a non-convex orthogonal polyhedron having reflex edges in just two directions (as opposed to three, in the general case) by placing the minimum number of edge guards on reflex edges only.

We show that $$\left\lfloor \frac{r-g}{2} \right\rfloor +1$$ reflex edge guards are sufficient, where $r$ is the number of reflex edges and $g$ is the polyhedron's genus. This bound is tight for $g=0$. We thereby generalize a classic planar Art Gallery theorem of O'Rourke, which states that the same upper bound holds for vertex guards in an orthogonal polygon with $r$ reflex vertices and $g$ holes.

Then we give a similar upper bound in terms of $m$, the total number of edges in the polyhedron. We prove that $$\left\lfloor \frac{m-4}{8} \right\rfloor +g$$ reflex edge guards are sufficient, whereas the previous best known bound was $\lfloor 11m/72+g/6\rfloor-1$ edge guards (not necessarily reflex).

We also consider the setting in which guards are \emph{open} (i.e., they are segments without the endpoints), proving that the same results hold even in this more challenging case.

Finally, we show how to compute guard locations matching the above bounds in $O(n \log n)$ time.
\end{abstract}

\section{Introduction}\label{s:1}

\subsection*{Background}
In discrete geometry, the \emph{Art Gallery} problem asks to place a (preferably small) number of \emph{guards} in a given enclosure, so that the guards collectively see the whole enclosure. Most classic results on the Art Gallery problem are surveyed in~\cite{art,shermer,urrutia2000}.

If the enclosure is a simple polygon with $n$ vertices and the guards are points (and a guard $v$ sees a point $p$ if and only if the straight line segment $vp$ does not intersect the exterior of the polygon), then $\lfloor n/3\rfloor$ guards are always sufficient, and there are polygons in which they are also necessary (see~\cite{chvatal}). If the polygon is \emph{orthogonal} (i.e., its edges meet at right angles), then the optimal number of point guards reduces to $\lfloor n/4\rfloor$. Furthermore, if the orthogonal polygon has $g$ holes, it can be guarded by
\begin{equation}
\left\lfloor \frac{n}{4}+\frac{g}{2} \right\rfloor
\label{eq1}
\end{equation}
point guards, as established by O'Rourke (see~\cite[Theorem~5.1]{art}). Because $n=2r-4g+4$, where $r$ is the number of \emph{reflex} vertices in the orthogonal polygon,\footnote{This equation is folklore, and it can be proved by cutting the polygon along $g$ edge-parallel segments to eliminate the holes, and then computing the sum of the $n+4g$ internal angles of the resulting simple orthogonal polygon, observing that a convex (resp.~reflex) angle gives a contribution of $\pi/2$ (resp.~$3\pi/2$) and the total sum is $\pi(n+4g-2)$.} we may express the same upper bound in terms of $r$ as
\begin{equation}
\left\lfloor \frac{r-g}{2} \right\rfloor+1.
\label{eq2}
\end{equation}
Even though O'Rourke does not mention this fact, a careful analysis of his method (a decomposition of the polygon into L-shaped pieces, see~\cite[Chapters~2.5,~2.6]{art}) reveals that, if $r>0$, all the guards can be chosen to lie on \emph{reflex} vertices.

Aside from general lower and upper bounds, it is known that the problem of minimizing the number of guards needed to cover a specific polygon is NP-complete if the guards are to be located on its vertices (even if the polygon is orthogonal, see~\cite{kats}). The problem is complete for the existential theory of the reals ($\exists\mathbb R$) if guards can be located anywhere inside the polygon (see~\cite{tillman}).

In recent years, the attention has shifted to 3-dimensional enclosures, and especially to polyhedra (i.e., bounded regions of $\mathbb R^3$ with connected and piecewise linear surface). Point guards are much less effective in this setting: there exist polyhedra with $n$ vertices where guards placed at every vertex do not see the whole interior, and where $\Omega(n^{3/2})$ non-vertex guards are required (refer to~\cite[Chapter~10.2.2]{art}). This motivates the study of more powerful guards: an \emph{edge guard} is a guard that has the extent of an entire edge; a point $p$ is visible to an edge guard $e$ if and only if there is a point of $e$ that sees $p$. Ideally, if a point guard represents a stationary sentinel, an edge guard models a ``patrolling'' one. Equivalently, in the paradigm where the polyhedron is a room that has to be fully illuminated, a point guard represents a light bulb, and an edge guard models a fluorescent tube.

Cano et al.\ proved that any polyhedron with $m$ edges can be guarded by at most $\lfloor 27m/32\rfloor$ edge guards (see~\cite{cano}). For \emph{orthogonal} polyhedra (i.e., polyhedra whose faces meet at right angles), Urrutia found in~\cite{urrutia2000} that $\lfloor m/6\rfloor$ edge guards are sufficient, and conjectured that $m/12 + O(1)$ are optimal for polyhedra of genus~0. Urrutia's upper bound was later improved by Benbernou et al.\ in~\cite{cccg}: an orthogonal polyhedron with $m$ edges and genus $g$ can be guarded by $\lfloor 11m/72+g/6\rfloor -1$ edge guards; if the polyhedron has $r$ reflex edges, then $\lfloor 7r/12\rfloor -g+1$ edge guards are sufficient. The same paper also contains the conjecture that $r/2 + O(1)$ edge guards always suffice.

Another contribution of~\cite{cccg} was to show that the same upper bounds hold if the edge guards are deprived of their two endpoints: i.e., they hold for \emph{open edge guards}. This weaker type of guard has been introduced by the author in~\cite{thesis} and has been studied also in~\cite{open2} and~\cite{open1}. For a more in-depth discussion on this topic and some motivations, refer to~\cite[Chapter~3.2]{thesis}.

\emph{Face guards} have been explored as well, first by Souvaine et al.\ in~\cite{faceguards1}, and then by the author in~\cite{faceguards2}. Yet another line of research concerns guarding \emph{terrains}: a terrain is a piecewise linear surface embedded in $\mathbb R^3$ that intersects any vertical line in exactly one point. In the context of terrains, vertex guards, edge guards, and face guards have been studied in~\cite{bose}, \cite{bose,everett}, and~\cite{terrain1,terrain2,terrain3,faceguards2}, respectively.

A leitmotif of the author's Ph.D. thesis~\cite{thesis} is that edge guards in 3-dimensional polyhedra represent the ``natural counterpart'' of vertex guards in 2-dimensional polygons, in that they seem to exhibit analogous properties in the Art Gallery problem and in similar visibility-related pursuit-evasion problems. For instance, while vertex guards are insufficient to guard even orthogonal polyhedra\footnote{For an example, see the top-right polyhedron in Figure~\ref{f:examples}: there are points in the central region that are not visible to any vertex.} and face guards are an unrealistic model for patrolling guards,\footnote{The author showed in~\cite{faceguards2} that replacing a face guard $F$ with a path lying on $F$ that guards the same region as $F$ may produce a path of quadratic complexity.} edge guards are a more reasonable choice and yield upper and lower bounds that resemble the ones for vertex guards in 2-dimensional polygons. For instance, we have the aforementioned upper bounds and the conjectures of Urrutia and Benbernou et al.\ for orthogonal polyhedra. Here we formulate stronger versions of both conjectures, incorporating also the genus of the polyhedron, and requiring that guards lie on \emph{reflex} edges only:

\begin{conjecture}\label{con4:1}
Any non-convex orthogonal polyhedron with $m$ edges and genus $g$ is guardable by at most
$$\left\lfloor \frac{m}{12}+\frac{g}{2} \right\rfloor$$
(open or closed) reflex edge guards.
\end{conjecture}

\begin{conjecture}\label{con4:2}
Any orthogonal polyhedron with $r>0$ reflex edges and genus $g$ is guardable by at most
$$\left\lfloor \frac {r-g} 2\right\rfloor +1$$
(open or closed) reflex edge guards.
\end{conjecture}

Observe that the upper bound expressed by Conjecture~\ref{con4:2} is exactly the same as~(\ref{eq2}), which was proved by O'Rourke for (reflex) vertex guards in orthogonal polygons, here generalized to orthogonal polyhedra. To justify Conjecture~\ref{con4:1}, note that extruding an orthogonal polygon with $n$ vertices and $g$ holes produces an orthogonal prism with $m=3n$ edges and genus $g$ (by ``orthogonal prism'' we mean an orthogonal polygon that is also a prism). Now, we know that the polygon can be guarded by a number of reflex vertex guards bounded by~(\ref{eq1}), which implies that the prism can be guarded by the same number of (open or closed) reflex edge guards. Plugging $n=m/3$ into~(\ref{eq1}), we obtain exactly the upper bound of Conjecture~\ref{con4:1}.

So, both conjectures hold at least for orthogonal prisms, and in this paper we seek to prove that they hold for larger classes of orthogonal polyhedra, as well. We remark that, unlike with polygons, the two upper bounds expressed by our conjectures are not equivalent. This is because the clear-cut equation $n=2r-4g+4$ that holds for orthogonal polygons becomes the pair of inequalities
\begin{equation}
\frac m6 +2g-2\ \leqslant\ r\ \leqslant\ \frac 56\,m-2g-12
\label{eq3}
\end{equation}
for orthogonal polyhedra, both of which are tight for every $g$, as proved in~\cite{cccg,thesis}. However, there is a partial dependance between the two conjectures: the left-hand side of~(\ref{eq3}) implies that, if Conjecture~\ref{con4:1} is true, then so is Conjecture~\ref{con4:2}.

For $g=0$, Conjectures~\ref{con4:1} and~\ref{con4:2} have matching lower bounds: these were given by Urrutia in~\cite{urrutia2000} and by Benbernou et al.\ in~\cite{cccg}, and additional lower-bound examples for Conjecture~\ref{con4:2} are found at the end of Section~\ref{s:3}. By contrast, for arbitrary genus, the situation is still unclear, and even the corresponding 2-dimensional problem  (i.e., optimally guarding orthogonal polygons with holes) is open, see~\cite{art,urrutia2000}.

\subsection*{Our contributions}

This paper focuses on 2-reflex orthogonal polyhedra, i.e., orthogonal polyhedra whose reflex edges lie in at most two different directions. This is a case of intermediate complexity between the 1-reflex case (i.e., orthogonal prisms) and the full 3-reflex case (i.e., general orthogonal polyhedra). In Figure~\ref{f:examples} we illustrate examples of these three classes: in particular, note that the bottom polyhedron has reflex edges in only two ``horizontal'' directions, and is therefore 2-reflex (it also has genus~1).

\begin{figure}[h!]
\centering
\includegraphics[width=0.95\linewidth]{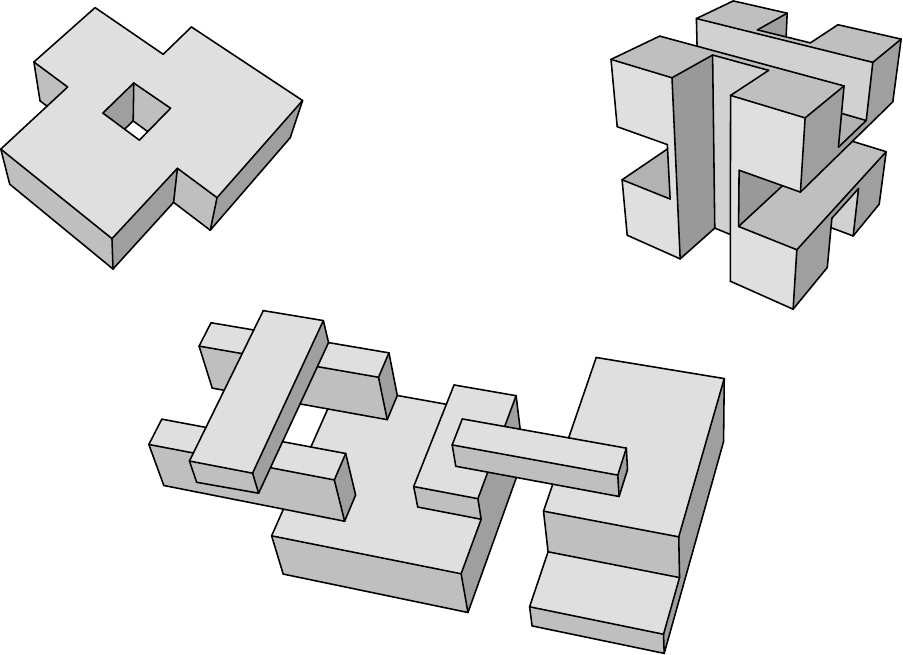}
\caption{Three orthogonal polyhedra. The top-left one is 1-reflex and the bottom one is 2-reflex. The top-right one is neither 1-reflex nor 2-reflex.}
\label{f:examples}
\end{figure}

2-reflex orthogonal polyhedra have been introduced by the author and investigated in the context of face guards in~\cite{faceguards2}. An equivalent but more practical way to define them is the following: take any finite set of boxes and ``attach'' them to each other by their bottom and top faces, but without letting their side faces touch. It should be apparent that 2-reflex orthogonal polyhedra can express a rich variety of 3-dimensional shapes, and are of interest in themselves. In Section~\ref{s:2} we formally analyze their structure and we establish some terminology.

There are also theoretical aspects to the study of 2-reflex orthogonal polyhedra: in line with our ultimate goal to prove Conjectures~\ref{con4:1} and~\ref{con4:2} for 3-reflex orthogonal polyhedra, and knowing that they hold at least for 1-reflex orthogonal polyhedra, the 2-reflex case seems a natural candidate of intermediate complexity. A proof of either conjecture limited to this important sub-case might be a necessary step toward a general proof, and we regard it as a key contribution: we believe that repeatedly ``cutting away'' 2-reflex orthogonal sub-polyhedra from a given 3-reflex orthogonal polyhedron eventually yields a ``core'' with enough structural properties to make it efficiently guardable.

In Section~\ref{s:3} we indeed prove Conjecture~\ref{con4:2} for 2-reflex orthogonal polyhedra. Along the way, we give auxiliary upper bounds for some sub-classes of polyhedra, and at the end of the section we show that they are all tight (for $g=0$).

In Section~\ref{s:4} we make progress toward Conjecture~\ref{con4:1} for 2-reflex orthogonal polyhedra, proving that
$$\left\lfloor \frac{m-4}{8} \right\rfloor +g$$
reflex edge guards are sufficient. This is an improvement upon the previous state of the art (limited to 2-reflex orthogonal polyhedra), in that it lowers the upper bound provided by Urrutia in~\cite{urrutia2000} by roughly $25\%$ and the one in~\cite{cccg} by roughly $18\%$ (for $g=0$). Also, it shows how guards can be chosen to lie on reflex edges, as opposed to arbitrary edges. Furthermore, we are able to prove Conjecture~\ref{con4:1} for a sub-class of 2-reflex orthogonal polyhedra called \emph{stacks} (for some examples of stacks see Figures~\ref{f:stack}--\ref{fig4:dcastle}, and for a definition refer to Section~\ref{s:2}).

In Section~\ref{s:5} we show that guard locations achieving the bounds given in Sections~\ref{s:3} and~\ref{s:4} can be computed in time $O(n\log n)$, where $n$ is the size of a representation of the polyhedron. In summary, through Sections~\ref{s:3}--\ref{s:5}, we will prove the following:
\begin{theorem}\label{t:main}
Given a 2-reflex orthogonal polyhedron of genus $g$ with $m$ edges, of which $r>0$ are reflex, a guarding set of at most 
$$\min\left\{\ \left\lfloor \frac{r-g}{2} \right\rfloor +1,\ \left\lfloor \frac{m-4}{8} \right\rfloor +g\ \right\}$$
(open or closed) reflex edge guards can be computed in $O(n \log n)$ time.
\end{theorem}

Finally, in Section~\ref{s:6} we discuss the computational complexity of finding the minimum number of guards for a given polyhedron. Among other results, we establish that the Art Gallery problem of guarding a 2-reflex orthogonal polyhedron with the minimum number of reflex edge guards is NP-complete. We remark that neither the NP-hardness of this problem nor its membership in NP is obvious.

A preliminary version of this paper has appeared in the author's Ph.D. thesis, see~\cite{thesis}.

\subsection*{Proof outline}

Here we outline our proof strategies for Sections~\ref{s:3} and~\ref{s:4}, which deal with Conjectures~\ref{con4:2} and~\ref{con4:1}, respectively. Due to the specific nature of our problem, all the known techniques for 3-dimensional Art Gallery problems are ineffective here. Namely, the techniques of Cano et al.\ in~\cite{cano} are designed for general polyhedra, and fail to take advantage of the structure of orthogonal polyhedra to place a small number of guards. Moreover, all the techniques in~\cite{cccg,cano,urrutia2000} place guards on arbitrary edges, whereas we insist on using only reflex edge guards.

Our proof in Section~\ref{s:3} follows the structure of O'Rourke's proof of the Art Gallery theorem for orthogonal polygons found in~\cite[Chapter~2.5]{art}. It turns out that the central part of O'Rourke's main argument can be considerably simplified, and then rephrased and generalized in terms of polyhedra (Lemmas~2.13--2.15 of~\cite{art} are condensed in our Lemma~\ref{l4:stack}), while the other parts of the proof require more sophisticated constructions and some novel ideas. We start by analyzing some sub-classes of polyhedra of increasing complexity: monotone orthogonal polyhedra, castles, double castles, and stacks. We then proceed to the main theorem.

For monotone orthogonal polyhedra we use a simple technique that is essentially the one used by O'Rourke for monotone orthogonal polygons. Castles and double castles require a deeper analysis based on some ad-hoc guarding strategies: these two classes of polyhedra roughly correspond to the ``histograms'' and ``double histograms'' in~\cite[Chapter~2.6]{art}. For stacks, we borrow from O'Rourke the idea of \emph{odd cuts}, which he used for 2-dimensional polygons, but also applies to this special class of polyhedra. An odd cut allows us to partition a stack into two smaller stacks and work on the two parts separately. After all possible odd cuts have been made, what is left is a collection of double castles, which we already know how to guard. However, this technique only applies to stacks of genus~0. For higher-genus stacks, we make some preliminary cuts to decrease the genus, and we show that our upper bound is still satisfied. For the full theorem, we consider a generic 2-reflex orthogonal polyhedron, and we show how to make some cuts to reduce its genus, and some further cuts to partition it into stacks. All these cuts are carefully made in such a way as to not compromise the upper bound we intend to prove.

In Section~\ref{s:4}, we seek an upper bound on reflex edge guards in terms of $m$ instead of $r$. We reuse the same construction as Section~\ref{s:3}, and we simply manipulate the bound we already have, replacing $r$ with an appropriate function of $m$. Ideally, we would like to use the right-hand-side inequality of~(\ref{eq3}), but it turns out to be too loose. Even specializing it for 2-reflex orthogonal polyhedra and sharpening it as much as possible fails to improve on the state of the art, and gives us essentially the upper bound of $\lfloor m/6\rfloor$ guards already found by Urrutia.

Our strategy is to take a step back and introduce a new parameter in the analysis of Section~\ref{s:3}. As a result, our Theorem~\ref{t4:2orthor} gives an upper bound in terms of $r$ and $g$ that also involves a third parameter $b$, which is the number of \emph{collars} in the polyhedron. A collar is a particular configuration of four reflex edges forming a rectangle, and introducing $b$ in our computations allows us to obtain the very sharp inequality $r\leqslant m/4 + 3g +b-3$. Plugging this expression into the one of Theorem~\ref{t4:2orthor} yields a bound of roughly $\lfloor m/8\rfloor$ guards, which improves on the state of the art (of course, limited to 2-reflex orthogonal polyhedra). Furthermore, in the case of stacks, we are able to prove that the left-hand-side of~(\ref{eq3}) holds with equality (as was the case with orthogonal polygons), implying that for stacks Conjecture~\ref{con4:1} is true.

As a possible direction for future research, we may try to improve the upper bound of $m/8+O(g)$ guards given in Section~\ref{s:4}, for example by treating separately other types of configurations, rather than just the collars.

\section{Definitions}\label{s:2}

For the purpose of this paper, an orthogonal polyhedron is defined as a connected 3-manifold (with boundary) in $\mathbb R^3$ that can be obtained as the union of finitely many \emph{boxes}, i.e., cuboids whose edges are parallel to the coordinate axes. Given an orthogonal polyhedron, two points $a$ and $b$ \emph{see} each other if and only if the segment $ab$ has no intersection with the exterior of the polyhedron. A point $p$ is \emph{guarded} by an edge $e$ of an orthogonal polyhedron if and only if there exists a point of $e$ that sees $p$. A subset $\mathcal S$ of the edges of an orthogonal polyhedron \emph{guards} the polyhedron itself if and only if every point of the polyhedron is guarded by at least one edge in $\mathcal S$.

We define the \emph{vertical} direction (or \emph{up-down} direction) to be the direction parallel to the $z$ axis. Accordingly, any direction parallel to the $xy$ plane is called \emph{horizontal}. Thus, the edges of an orthogonal polyhedron are either vertical or horizontal. We say that an orthogonal polyhedron is \emph{1-reflex} if it has no horizontal reflex edges, and \emph{2-reflex} if it has no vertical reflex edges (see Figure~\ref{f:examples}). Observe that the surface of a 2-reflex orthogonal polyhedron must be connected, and hence its \emph{genus} is well defined.

The intersection of a 2-reflex orthogonal polyhedron and a horizontal plane that does not contain any of its vertices is a collection of pairwise disjoint rectangles $R_1, R_2, \dots, R_k$ (note that this is not true of general orthogonal polyhedra, whose horizontal sections may be arbitrary orthogonal polygons). As the plane is moved upward or downward, the shape of the intersection does not change until a vertex is encountered. Focusing on one such rectangle $R_i$, the intersection between the polyhedron and the plane contains a copy of $R_i$ as long as the plane's ``altitude'' is within a certain interval: the union of such copies of $R_i$ is a cuboid. Thus, a natural way to partition a 2-reflex orthogonal polyhedron is into maximal cuboids, any two of which are either disjoint or touch each other at their top or bottom faces. Each cuboid in such a partition is called a \emph{brick}, and any non-empty intersection between two bricks is called their \emph{contact rectangle}. Figure~\ref{fig4:1} shows all the possible (combinatorial) ways in which two bricks can touch each other.

\begin{figure}[h!]
\centering
\subfigure[$m-m'=0$; $r-r'=4$]{\includegraphics[width=0.15\linewidth]{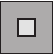}}\qquad
\subfigure[$m-m'=0$; $r-r'=3$]{\includegraphics[width=0.15\linewidth]{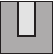}}\qquad
\subfigure[$m-m'=-3$; $r-r'=2$]{\includegraphics[width=0.15\linewidth]{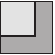}}\qquad
\subfigure[$m-m'=-6$; $r-r'=1$]{\includegraphics[width=0.15\linewidth]{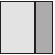}}\qquad
\subfigure[$m-m'=0$; $r-r'=2$]{\includegraphics[width=0.15\linewidth]{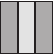}}\\
\subfigure[$m-m'=0$; $r-r'=4$]{\includegraphics[width=0.15\linewidth]{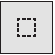}}\qquad
\subfigure[$m-m'=0$; $r-r'=3$]{\includegraphics[width=0.15\linewidth]{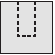}}\qquad
\subfigure[$m-m'=-3$; $r-r'=2$]{\includegraphics[width=0.15\linewidth]{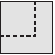}}\qquad
\subfigure[$m-m'=-6$; $r-r'=1$]{\includegraphics[width=0.15\linewidth]{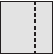}}\qquad
\subfigure[$m-m'=0$; $r-r'=2$]{\includegraphics[width=0.15\linewidth]{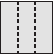}}\\
\subfigure[$m-m'=2$; $r-r'=3$]{\includegraphics[width=0.15\linewidth]{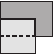}}\qquad
\subfigure[$m-m'=4$; $r-r'=4$]{\includegraphics[width=0.15\linewidth]{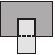}}\qquad
\subfigure[$m-m'=4$; $r-r'=3$]{\includegraphics[width=0.1504\linewidth]{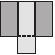}}\qquad
\subfigure[$m-m'=4$; $r-r'=4$]{\includegraphics[width=0.15\linewidth]{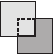}}\qquad
\subfigure[$m-m'=8$; $r-r'=4$]{\includegraphics[width=0.15\linewidth]{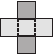}}\\
\subfigure[$m-m'=2$; $r-r'=3$]{\includegraphics[width=0.15\linewidth]{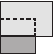}}\qquad
\subfigure[$m-m'=4$; $r-r'=4$]{\includegraphics[width=0.15\linewidth]{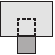}}\qquad
\subfigure[$m-m'=4$; $r-r'=3$]{\includegraphics[width=0.15\linewidth]{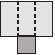}}\qquad
\subfigure[$m-m'=0$; $r-r'=2$]{\includegraphics[width=0.15\linewidth]{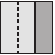}}\qquad
\subfigure[$m-m'=-1$; $r-r'=2$]{\label{f4:typet}\includegraphics[width=0.15\linewidth]{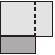}}
\caption{All possible contact rectangles of two adjacent bricks, viewed from above. The light-shaded bricks lie on top of the darker ones; reflex edges are shown in bold, and the dashed ones are those covered by the top brick (i.e., the ones invisible from above).}
\label{fig4:1}
\end{figure}

It is not hard to see that Figure~\ref{fig4:1} illustrates all such possibilities: it suffices to enumerate all the combinatorially different ways in which two distinct rectangles $R$ and $R'$ can overlap. We may do so by drawing $R$ and $R'$ on a $3\times 3$ grid and assuming that their intersection $S$ contains at least the central square of the grid. Note that $S$ is itself a grid-aligned rectangle consisting of either one, two, three, four, or six squares (it cannot contain all the nine squares of the grid, or $R$ and $R'$ would coincide). Suppose that $S$ contains only the central square, and observe that each edge of $S$ must lie either on the boundary of $R$ or on the boundary of $R'$. If all the edges of $S$ lie on the boundary of $R$ (resp.~$R'$), we have a configuration of type~(a) (resp.~(f)). If exactly three edges of $S$ lie on the boundary of $R$ (resp.~$R'$), we have a configuration of type~(l) (resp.~(q)). If two adjacent (resp.~opposite) edges of $S$ lie on the boundary of $R$ and the other two lie on the boundary of $R'$, we have a configuration of type~(n) (resp.~(o)). The cases where $S$ consists of more than one square of the grid can be enumerated in a similar way.

By ``cutting'' a 2-reflex orthogonal polyhedron along the contact rectangle of two adjacent bricks, all the reflex edges bordering the rectangle turn into convex edges, and the edge set will be modified according to the type of the contact rectangle. A convex edge may split in two distinct edges, new edges may be created, and multiple edges may merge together.

Figure~\ref{fig4:1} also indicates the number of edges and reflex edges gained or lost after such a split, for each different configuration. $m$ and $m'$ are the number of edges in the polyhedron before and after the split, respectively. Similarly, $r$ and $r'$ are the number of reflex edges before and after the split, respectively.

For example, in Figure~\ref{fig4:2}, a type-(t) contact rectangle between two bricks is illustrated, before and after the cut. The polyhedron before the cut has $m=23$ edges, of which $r=2$ are reflex. The two polyhedra after the cut have $m'=24$ edges in total, of which $r'=0$ are reflex (cf.\ Figure~\ref{f4:typet}).

\begin{figure}[h!]
\centering
\includegraphics[width=\linewidth]{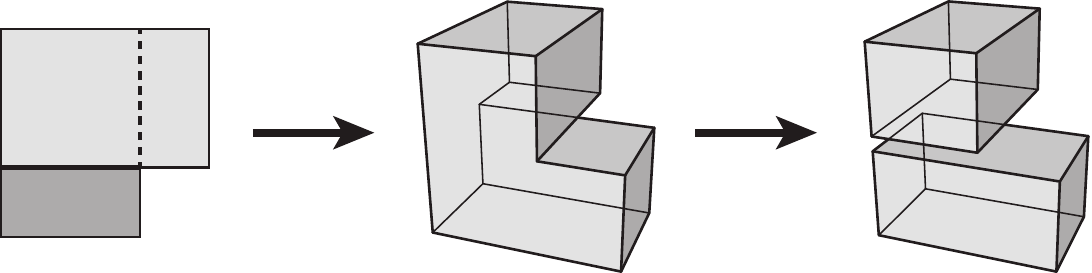}
\caption{Type-(t) contact rectangle between two bricks, before and after a cut (refer to Figure~\ref{fig4:1})}
\label{fig4:2}
\end{figure}

\begin{remark}\label{r4:degen}
When cutting a polyhedron of positive genus along a contact rectangle, we may fail to disconnect it, but just lower its genus. The resulting polyhedron is \emph{degenerate}, in that its boundary is self-intersecting: indeed, in place of the contact rectangle we now have a region that belongs to the top face of a brick and the bottom fare of another brick. We will occasionally encounter such degeneracies in intermediate steps of inductive proofs, and we will tolerate them as long as their presence does not invalidate our reasoning.
\end{remark}

Referring again to Figure~\ref{fig4:1}, we call each type-(a) or type-(f) contact rectangle a \emph{collar}, because its boundary is made of four reflex edges ``winding'' around a smaller brick. Singling out collars to treat them as separate cases will be useful in our proofs. The (technical) reason is that collars minimize the ratio $$\frac{m-m'+12}{r-r'}.$$ This ratio is 3 for collars, whereas it is at least 4 for any other contact type.

We also single out contact types~(d) and~(i), because they produce just one reflex edge each, and this will turn out to be the hardest case to handle. These two contact types will be called \emph{primitive}, and a 2-reflex orthogonal polyhedron having only primitive contact rectangles will be called a \emph{stack}. As Figure~\ref{f:stack} suggests, there are stacks of any genus.

\begin{figure}[h!]
\centering
\includegraphics[width=0.6\linewidth]{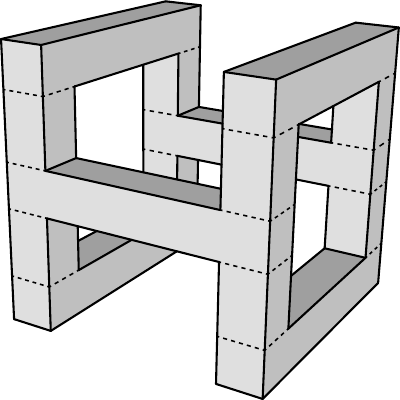}
\caption{Stack of genus 3 with dashed lines marking contact rectangles between bricks}
\label{f:stack}
\end{figure}

Observe that each brick of a stack may have up to two bricks attached to each of its horizontal faces. A stack where each brick has either zero or two other bricks attached to its top face is called a \emph{castle} (Figure~\ref{fig4:castle} shows an example). The bottom brick of a castle is called its \emph{base brick}. It is easy to see that a castle has an even number of reflex edges, because they all come in parallel pairs.

\begin{figure}[h!]
\centering
\includegraphics[width=0.65\linewidth]{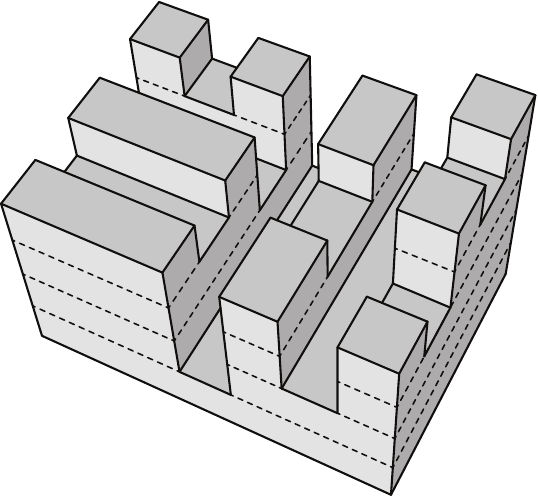}
\caption{Castle with dashed lines marking contact rectangles}
\label{fig4:castle}
\end{figure}

If a castle is turned upside down and its base brick is attached to another castle's base brick via a primitive contact rectangle, the resulting shape is a stack called \emph{double castle} (see Figure~\ref{fig4:dcastle}). It follows that a double castle has an odd number of reflex edges.

\begin{figure}[h!]
\centering
\includegraphics[width=0.5\linewidth]{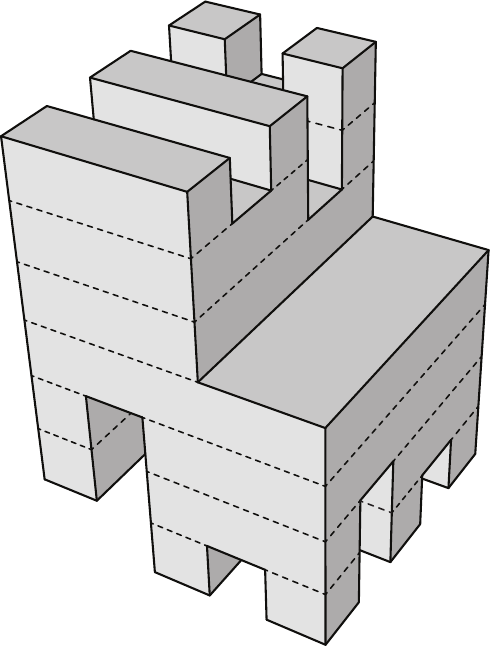}
\caption{Double castle with dashed lines marking contact rectangles}
\label{fig4:dcastle}
\end{figure}

Castles and double castles will play a fundamental role in the next section. Indeed, our castles are 3-dimensional cognates of the \emph{histograms} found in O'Rourke's proof (see~\cite[Chapter~2.6.1]{art}). Furthermore, it is easy to observe the similarity between our double castles and the ``double histogram'' in~\cite[Figure~2.59]{art}. Specifically, according to O'Rourke, a histogram is \emph{``an orthogonal polygon that has one horizontal edge (the base) equal in length to the sum of lengths of all the other horizontal edges''}. By analogy with histograms, we can observe that a castle has one horizontal face equal in area to the sum of areas of all the other horizontal faces. However, castles have additional structural properties, such as the fact that reflex edges come in parallel pairs. By extruding a histogram in general position (i.e., with no two collinear edges) where each reflex vertex is adjacent to another reflex vertex, we obtain a castle; however, the polyhedron in Figure~\ref{fig4:castle} demonstrates that not all castles can be constructed by extrusion. Conversely, any non-empty intersection between a castle and a vertical plane is a histogram, although not all histograms can be obtained by sectioning castles.

\section{Bounds in terms of $r$}\label{s:3}

In this section we are going to prove Conjecture~\ref{con4:2} for 2-reflex orthogonal polyhedra. As outlined in Section~\ref{s:1}, we first do so for some smaller classes of polyhedra, and then we work our way up to more general structures. At the end of the section, we will show that all the upper bounds given in our lemmas are tight.

We say that an orthogonal polyhedron is \emph{monotone} if it is a prism and if its intersection with any vertical line is either empty or a single line segment. Equivalently, a monotone orthogonal polyhedron can be constructed by extruding a monotone orthogonal polygon (refer to~\cite{art}). For instance, the prism in Figure~\ref{fig4:mono} is a monotone orthogonal polyhedron, because it is obtained by extruding its base (i.e., the light-shaded face), which is a monotone orthogonal polygon.

\begin{lemma}\label{l4:mono}
Any monotone orthogonal polyhedron with $r>0$ reflex edges is guardable by at most
\begin{equation}
\left\lfloor \frac{r}{2} \right\rfloor+1
\label{eq4:aaa}
\end{equation}
(open or closed) reflex edge guards.
\end{lemma}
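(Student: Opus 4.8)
The plan is to reduce the three-dimensional problem to a planar one, but without taking the extrusion characterization as a starting point: instead I will \emph{derive} it from the stated definition, so that every monotone orthogonal polyhedron is covered rather than a convenient subclass. Let $Q$ be a monotone orthogonal polyhedron with $r>0$ reflex edges. By the standing convention all its reflex edges are horizontal, and since $Q$ is a prism, its reflex edges arise from reflex vertices of the base and are therefore parallel to the prism's axis; as at least one such edge exists and is horizontal, the axis is horizontal, say parallel to the $y$-axis. Writing $Q=B\times[0,L]$, the base $B$ is an orthogonal polygon in the vertical $xz$-plane. The intersection of $Q$ with the vertical line through $(x_0,y_0)$ is $\{x_0\}\times\{y_0\}\times\{z:(x_0,z)\in B\}$, so the single-segment condition holds for every vertical line exactly when $\{z:(x_0,z)\in B\}$ is an interval for every $x_0$; that is, precisely when $B$ is monotone in the vertical ($z$) direction. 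Thus $B$ is a \emph{simple} (hence hole-free) monotone orthogonal polygon whose reflex vertices are in bijection with the reflex edges of $Q$, so $B$ has exactly $r>0$ reflex vertices.

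Next I would establish the visibility correspondence that powers the reduction. Fix a point $v\in B$ and consider the segment $\{v\}\times[0,L]$; when $v$ is a vertex of $B$ this segment is an edge of $Q$, reflex precisely when $v$ is reflex. I claim it sees a point $p=(x_p,y_p,z_p)$ of $Q$ if and only if $v$ sees the projection $(x_p,z_p)$ inside $B$. For the forward direction, the orthogonal projection to the $xz$-plane maps $Q$ onto $B$ and sends any sightline lying in $Q$ to a segment lying in $B$, joining $v$ to $(x_p,z_p)$. For the converse, if $v$ sees $(x_p,z_p)$ in $B$, take the guard point $(x_v,y_p,z_v)$ at height $y_p$: the segment to $p$ lies in the plane $y=y_p$ and projects onto the in-$B$ sightline, hence stays inside $Q$. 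Consequently any family of vertex guards covering $B$ lifts, by extrusion, to a family of edge guards covering $Q$, with reflex vertices lifting to reflex edges.

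I would then invoke the classical planar bound for simple orthogonal polygons recalled in the introduction: since $B$ is simple with $r>0$ reflex vertices, it can be guarded by $\lfloor r/2\rfloor+1$ vertex guards all placed on reflex vertices (cf.\ \cite{art}). Extruding these yields $\lfloor r/2\rfloor+1$ reflex edge guards covering $Q$, the desired bound. It remains to upgrade this to \emph{open} guards by refining the converse above. For a target with $y_p\in(0,L)$ the witnessing guard point $(x_v,y_p,z_v)$ already lies in the interior of the edge; for a target on an end cap, $y_p\in\{0,L\}$, I would instead use $(x_v,\varepsilon,z_v)$ for small $\varepsilon>0$ and observe that the segment to $p$ still projects onto the same in-$B$ sightline while its $y$-coordinate stays in $[0,L]$, so it remains inside $Q$. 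Thus interior points of the edge guards already suffice, and the bound holds for open guards as well.

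The main obstacle I anticipate is not the planar bound, which is classical, but making the reduction genuinely cover \emph{all} monotone orthogonal polyhedra: one must argue carefully from the definition that $Q$ really is a horizontal-axis prism over a vertically monotone polygon (so that no monotone polyhedron is silently excluded), and one must verify the visibility correspondence in the degenerate cases—points on the two end caps, and sightlines that only graze the boundary—so that the open-guard argument does not quietly fail at the extremes. Once these cases are checked, the extrusion of the planar guarding is clean and the count is exactly $\lfloor r/2\rfloor+1$.
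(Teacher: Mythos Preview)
Your argument is correct but takes a genuinely different route from the paper's. The paper gives a direct, self-contained construction: it sorts the reflex edges by their $x$-coordinate as $e_1,\dots,e_r$, places guards on the odd-indexed ones (and on $e_r$ if $r$ is even), and then argues that the guard on $e_i$ sees every point whose $x$-coordinate lies between $x_{i-1}$ and $x_{i+1}$. This is essentially the monotone-polygon sweep argument carried out in three dimensions, with no appeal to prior results.

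Your approach instead reduces to the planar case: you carefully derive (from the paper's definition) that a monotone orthogonal polyhedron with $r>0$ reflex edges is a horizontal-axis prism over a vertically monotone orthogonal polygon $B$, establish the natural visibility correspondence between edge guards in $Q$ and vertex guards in $B$, and then invoke O'Rourke's $\lfloor r/2\rfloor+1$ bound for $B$ with guards on reflex vertices. This is sound, and your treatment of the open-guard case (perturbing the witness height off the end caps) is clean. What you gain is a transparent explanation of \emph{why} the monotone hypothesis matters and a rigorous link back to the classical 2D theorem; what you lose is self-containment, since you import the full O'Rourke machinery (an L-shaped decomposition argument) to handle a polygon that is already monotone---overkill for what the paper dispatches in a paragraph. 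Note also that the reflex-vertex placement you cite is, as the introduction remarks, not stated explicitly in~\cite{art} but requires a ``careful analysis'' of O'Rourke's method, so your black box is slightly heavier than the citation suggests.
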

\begin{proof}
Without loss of generality, suppose that the reflex edges are parallel to the $y$ axis (hence, all the points of a reflex edge have the same $x$ coordinate). Sort all reflex edges by increasing $x$ coordinate, breaking ties arbitrarily, and let $(e_i)_{1\leqslant i\leqslant r}$ be the sorted sequence. Now assign a guard to each edge whose index is an odd number, plus a guard to $e_r$ if $r$ is even. Thus, the bound (\ref{eq4:aaa}) is matched.

To show that the polyhedron is guarded, let $x_i$ be the $x$ coordinate of $e_i$, for $1\leqslant i\leqslant r$, and let $x_0$ and $x_{r+1}$ be the minimum and the maximum $x$ coordinate of a vertex of the polyhedron, respectively. Observe that the set of points whose $x$ coordinate lies in the interval $[x_{i}, x_{i+1}]$ is a cuboid, to which we refer as $\mathcal C_i$. Note that a guard lying on $e_i$ guards at least $\mathcal C_{i-1}$ and $\mathcal C_i$, since it lies on their common boundary, as Figure~\ref{fig4:mono} suggests. It follows that all the $\mathcal C_i$'s are guarded.
\end{proof}

\begin{figure}[h!]
\centering
\includegraphics[width=0.65\linewidth]{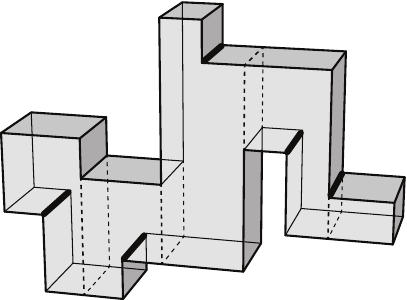}
\caption{Guarding a monotone orthogonal polyhedron. The reflex edges in bold represent guards.}
\label{fig4:mono}
\end{figure}

Observe that among the monotone orthogonal polyhedra there are all the castles that are also prisms. In the next lemma we show that we can actually use one less guard for the remaining castles.

\begin{lemma}\label{l4:castle}
Any castle with $2r$ reflex edges that is not a prism is guardable by at most $r$ (open or closed) reflex edge guards.
\end{lemma}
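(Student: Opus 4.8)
The plan is to prove the bound by induction on the number of bricks of the castle, exploiting the recursive structure that a castle either is a single brick with two bricks on top (each heading its own sub-castle) or can be decomposed along the contact rectangles emanating from the base brick. Since a castle with $2r$ reflex edges that is not a prism has $r\geqslant 1$, the base case is $r=1$: here the castle is a single brick with exactly one brick attached on top via a type-(d) or type-(i) (primitive) contact rectangle, producing exactly $2$ reflex edges; a single guard on one of these two reflex edges sees the whole shape (it is monotone, so Lemma~\ref{l4:mono} with $r=1$ already gives this, but we want to keep an eye on the parity of reflex edges so the count stays sharp). First I would set up the induction hypothesis precisely: every castle with $2s$ reflex edges, $1\leqslant s<r$, that is not a prism admits $s$ reflex edge guards, and I would also want the auxiliary statement that a castle which \emph{is} a prism (i.e.\ a $1$-reflex castle, monotone) with $2s$ reflex edges is guardable by $s$ guards as well by Lemma~\ref{l4:mono} — but here $\lfloor 2s/2\rfloor+1 = s+1$, so the prism case genuinely needs one more guard, which is exactly why the lemma excludes prisms and why we must be careful about which sub-castles are prisms.

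The core of the argument is the inductive step. Consider the base brick $B$ of the castle; since the castle is not a prism, $B$ has exactly two bricks attached on its top face (a castle with zero bricks on top of $B$ would be just $B$, a prism with no reflex edges, contradicting $r\geqslant 1$). Cutting along the contact rectangle of one of these two top bricks is a primitive cut, removing one pair of reflex edges incident to that rectangle — but this does not cleanly separate the polyhedron into two castles. The right move is instead to split the whole castle at the top face of $B$: the portion above consists of (at most) two disjoint sub-stacks sitting on the two contact rectangles, and the base brick $B$ together with whatever hangs below — but in a castle nothing hangs below the base brick, so removing $B$ leaves two sub-castles $C_1,C_2$ (each rooted at a brick formerly resting on $B$), with $2r_1$ and $2r_2$ reflex edges respectively, plus the $4$ reflex edges that bordered the two contact rectangles on top of $B$ and are destroyed by the cut. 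Counting: $4 + 2r_1 + 2r_2 = 2r$, so $r_1+r_2 = r-2$. By induction (if $C_j$ is not a prism) we guard $C_j$ with $r_j$ guards; the issue is recombining these guards so that they still see all of $B$ and the newly created convex region near its top face.

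The main obstacle — and the place where genuine care is needed — is exactly this recombination, because a guard placed inside $C_j$ need not see down into the base brick $B$, and conversely we have a budget of only $r = r_1+r_2+2$ guards, i.e.\ just one extra guard per sub-castle relative to the inductive allotment $r_1+r_2$. The trick I would use is to choose, in each $C_j$, a lowest reflex edge $e_j$ — one bordering the contact rectangle where $C_j$ meets $B$ — and force a guard onto $e_j$ (this is the standard "bottom edge" choice from the proof of Lemma~\ref{l4:mono}, and it remains available because that lemma's construction, applied to the monotone spine, can always be arranged to put a guard on the lowest reflex edge). A guard on $e_j$ sees all of the brick of $C_j$ resting on $B$, and, since $e_j$ lies on the top face of $B$ and spans the full width of the contact rectangle in the monotone direction, it sees straight down through the interior of $B$ to its bottom face; thus the two guards on $e_1,e_2$ together cover $B$ entirely (here one uses that $B$ is a single convex brick and the contact rectangles are primitive, so the uncovered part of $B$'s top face is the "slab" between the two rectangles, which is visible from $e_1$ or $e_2$ by the same line-sweep reasoning as in Lemma~\ref{l4:mono}). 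Then $C_1$ costs $\max(r_1,1)$ guards and $C_2$ costs $\max(r_2,1)$: if both $r_j\geqslant 1$ this is $r_1+r_2 = r-2 < r$, leaving slack; the tight case is when one sub-castle, say $C_2$, degenerates to a prism ($r_2=0$ is impossible since then $r_1=r-2$ and we would only spend $r_1+1 = r-1 < r$ — still fine), or when $r_2 = 0$ does not occur but $C_2$ is a $1$-reflex prism with $r_2\geqslant 1$, needing $r_2+1$ guards by Lemma~\ref{l4:mono}; then the total is $r_1 + (r_2+1) = r-1$, again within budget. So in every sub-case the budget is met, and typically with room to spare — the apparent danger of both sub-castles being prisms and each costing an extra guard is absorbed because that extra "$+1$" is already paid for by the "$+2$" in $r = r_1+r_2+2$, but one must double-check the boundary situation where a sub-castle is empty (a contact rectangle on top of $B$ with nothing above it), which cannot happen in a castle by definition, so the decomposition always yields two genuine non-empty sub-castles and the induction closes. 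I would finish by noting that since no guard is ever placed at an endpoint of $e_j$ in the above (the lowest-edge guard can be slid to the edge's interior without losing coverage, exactly as in the open-guard version of Lemma~\ref{l4:mono}), the bound holds for open edge guards as well.
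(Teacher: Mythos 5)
Your decomposition at the base brick is the right move and matches the paper's, but there is a genuine arithmetic gap at its core: each of the two contact rectangles on top of the base brick is \emph{primitive} (type~(d) or~(i)), and by definition a primitive contact produces exactly \emph{one} reflex edge, not two. So the correct count is $2r = 2r_1+2r_2+2$, i.e.\ $r = r_1+r_2+1$, not $r_1+r_2+2$. This is precisely why the paper singles out primitive contacts as ``the hardest case to handle'': you get only \emph{one} spare guard above the inductive allotment $r_1+r_2$, not two. Your entire budget analysis leans on the extra slack. In particular, your scheme of forcing a bottom guard into each sub-castle costs at least $\max(r_1,1)+\max(r_2,1)$, and when a sub-castle $\mathcal C_j$ is a prism it genuinely costs $r_j+1$ (as you note, Lemma~\ref{l4:mono} gives $s+1$, not $s$); with the correct count, two prism sub-castles would then cost $r_1+r_2+2 = r+1 > r$. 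The case you dismiss as ``absorbed'' is exactly the case that breaks. The paper closes it with an observation you are missing: if both $\mathcal C_1$ and $\mathcal C_2$ are prisms, at least one of them must have its reflex edges \emph{orthogonal} to its contact edge $e_1$ (otherwise $\mathcal C$ itself would be a prism, contradicting the hypothesis), and then the single guard $e_1$ sees all of $\mathcal C_1$ \emph{and} the entire convex base brick, so that sub-castle plus the base costs one guard total. Similarly, when exactly one sub-castle is a prism with reflex edges parallel to $e_1$, the paper merges it with the base brick into a monotone polyhedron with $2r_1+1$ reflex edges and applies Lemma~\ref{l4:mono} to get $r_1+1$ guards, rather than guarding the prism and the base separately.

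Two smaller problems. First, your base case is not a castle: by definition the base brick has either zero or two bricks on its top face, never one; the minimal non-prism castle is three bricks ($r=1$), handled by the ``both sub-castles are prisms'' case. Second, your recombination relies on a strengthened inductive hypothesis --- that the $r_j$ guards of a non-prism sub-castle can always be arranged to include one on the lowest reflex edge $e_j$ --- which you assert but do not prove, and which Lemma~\ref{l4:mono} only supports for prisms. The paper avoids needing it: since the base brick is convex and $e_1$ lies on its boundary, one guard on $e_1$ sees all of it, and that single guard is exactly the ``$+1$'' in $r = r_1+r_2+1$.
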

\begin{proof}
We prove our claim by well-founded induction on $r$. So, suppose the claim is true for all castles that are not prisms and have fewer than $2r$ reflex edges, and let $\mathcal C$ be a non-prism castle having exactly $2r$ reflex edges. Note that $\mathcal C$ cannot be a cuboid, and hence $r$ is strictly positive and two castles $\mathcal C_1$ and $\mathcal C_2$ must lie on top of the base brick of $\mathcal C$. Let $e_1$ (resp.\ $e_2$) be the reflex edge bordering the contact rectangle between the base brick of $\mathcal C$ and the base brick of $\mathcal C_1$ (resp.\ $\mathcal C_2$). Let $2r_1$ and $2r_2$ be the numbers of reflex edges of $\mathcal C_1$ and $\mathcal C_2$, respectively. It follows that $$r=r_1+r_2+1.$$
We have three cases.
\begin{itemize}
\item If neither $\mathcal C_1$ nor $\mathcal C_2$ is a prism, they both satisfy the inductive hypothesis and can be guarded by $r_1$ and $r_2$ reflex edge guards, respectively. Now we just assign a guard to $e_1$ in order to guard the base block of $\mathcal C$, and our upper bound of $r$ guards is matched.

\item If $\mathcal C_1$ is a prism and $\mathcal C_2$ is not (the symmetric case is analogous), then the induction hypothesis applies to $\mathcal C_2$, and we place $r_2$ reflex edge guards accordingly. Now, because $\mathcal C_1$ is a prism, its reflex edges are all parallel, and two sub-cases arise.
\begin{itemize}
\item If the reflex edges of $\mathcal C_1$ are parallel to $e_1$ (or $\mathcal C_1$ has no reflex edges), then $\mathcal C_1$ and the base brick of $\mathcal C$ together form a monotone orthogonal polyhedron with $2r_1+1$ reflex edges in total (see Figure~\ref{fig4:monocast}). By Lemma~\ref{l4:mono}, such a  polyhedron can be guarded by $r_1+1$ guards. Along with the previously assigned $r_2$ guards, this yields $r$ guards, as desired.

\begin{figure}[h!]
\centering
\includegraphics[width=0.7\linewidth]{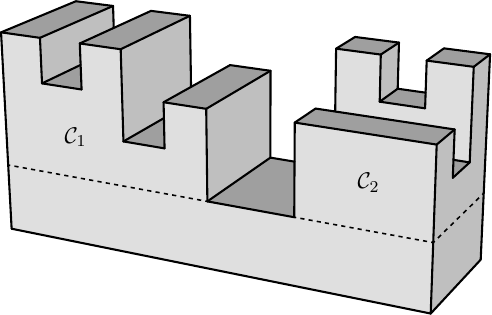}
\caption{The prism $\mathcal C_1$ and the base brick together form a monotone orthogonal polyhedron.}
\label{fig4:monocast}
\end{figure}

\item If the reflex edges of $\mathcal C_1$ are not parallel to $e_1$, then they are orthogonal to $e_1$. As a consequence, all of $\mathcal C_1$ is visible to $e_1$, and so is the base brick of $\mathcal C$, as Figure~\ref{fig4:monoguard} illustrates. It follows that $r_2+1$ guards are sufficient in this case, which is less than $r$.
\end{itemize}

\begin{figure}[h!]
\centering
\includegraphics[width=0.45\linewidth]{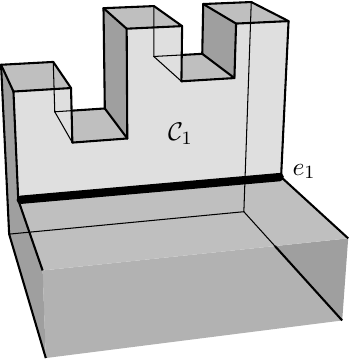}
\caption{Edge $e_1$ guards all of $\mathcal C_1$ plus the base brick.}
\label{fig4:monoguard}
\end{figure}

\item If both $\mathcal C_1$ and $\mathcal C_2$ are prisms, at least one of them (say, $\mathcal C_1$) must have reflex edges that are orthogonal to $e_1$, otherwise $\mathcal C$ itself would be a prism. Therefore $r_1\geqslant 1$, and $\mathcal C_1$ is guarded by assigning a guard to $e_1$, as in Figure~\ref{fig4:monoguard}. Again, two sub-cases arise.
\begin{itemize}
\item If the reflex edges of $\mathcal C_2$ are parallel to $e_2$, then $\mathcal C_2$ and the base brick of $\mathcal C$ form a monotone orthogonal polyhedron with $2r_2+1$ reflex edges (cf.~Figure~\ref{fig4:monocast}), which is guardable by $r_2+1$ reflex edge guards, by Lemma~\ref{l4:mono}. Overall, we assigned $1+r_2+1 \leqslant r_1+r_2+1=r$ guards, as required.

\item If the reflex edges of $\mathcal C_2$ are orthogonal to $e_2$, then $e_2$ guards $\mathcal C_2$, along with the base brick of $\mathcal C$ (refer to Figure~\ref{fig4:monoguard} again). We assigned only two guards, and $2\leqslant r_1+1\leqslant r$.
\end{itemize}
\end{itemize}
Hence, $r$ or fewer guards are sufficient in every case.
\end{proof}

The two previous lemmas enable us to prove that our upper bound holds at least for double castles.

\begin{lemma}\label{l4:dcastle}
Any double castle with $r$ reflex edges is guardable by at most
\begin{equation}
\left\lfloor \frac{r}{2} \right\rfloor +1
\label{eq4:ccc}
\end{equation}
(open or closed) reflex edge guards.
\end{lemma}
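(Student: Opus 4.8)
The plan is to exploit the recursive structure of a double castle. Write $\mathcal P$ for the given double castle; by definition it is obtained by gluing two castles $\mathcal C_1$ and $\mathcal C_2$, with base bricks $B_1,B_2$ and $2r_1,2r_2$ reflex edges respectively, along a primitive contact rectangle whose unique reflex edge I call $e$. Cutting $\mathcal P$ along this rectangle destroys exactly $e$, so $r=2r_1+2r_2+1$ and the target is $r_1+r_2+1$ guards. The guiding observation is that a single guard on $e$ sees all of $B_1\cup B_2$ (each is a box with $e$ on its boundary, so it is seen even by an open guard interior to $e$); hence the junction of the two castles is cheap, and the real task is to cover $\mathcal C_1$ and $\mathcal C_2$ with the remaining $r_1+r_2$ guards.

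I would first dispose of the generic case: if neither $\mathcal C_1$ nor $\mathcal C_2$ is a prism, then Lemma~\ref{l4:castle} guards each $\mathcal C_i$ with $r_i$ reflex edge guards, and since $\mathcal C_1\cup\mathcal C_2=\mathcal P$ this already gives $r_1+r_2<r_1+r_2+1$ guards. If exactly one of the two, say $\mathcal C_1$, is a prism, I would note that a castle that is a prism meets every vertical line in a single contiguous stack of bricks, hence is a monotone orthogonal polyhedron, so Lemma~\ref{l4:mono} guards $\mathcal C_1$ with $r_1+1$ guards (or, when $r_1=0$ so that $\mathcal C_1$ is a single brick, one guard on $e$ suffices for it), while Lemma~\ref{l4:castle} guards the non-prism $\mathcal C_2\supseteq B_2$ with $r_2$ guards; the total is $r_1+r_2+1$.

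The hard case — which I expect to be the main obstacle — is when both $\mathcal C_1$ and $\mathcal C_2$ are prisms, since Lemma~\ref{l4:mono} now costs $r_i+1$ apiece and the naive sum $r_1+r_2+2$ overshoots by one. The leverage is that $\mathcal P$ is $2$-reflex: the reflex directions of $\mathcal C_1$, of $\mathcal C_2$ and of $e$ occupy at most two directions in space. If they all coincide then $\mathcal P$ is a prism over a planar double castle, which is a monotone orthogonal polygon, so $\mathcal P$ is itself a monotone orthogonal polyhedron and Lemma~\ref{l4:mono} finishes directly with $\lfloor r/2\rfloor+1=r_1+r_2+1$ guards. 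Otherwise exactly two directions occur; here I would try either to fuse a base brick with a constituent castle so as to recover a genuine monotone polyhedron (for instance, when $e$ is parallel to $\mathcal C_1$'s reflex direction, $\mathcal C_1\cup B_2$ is again a prism, handled at cost $r_1+1$ with the junction brick included), or, when $e$ is perpendicular to the common reflex direction of the two castles, to slice $\mathcal P$ by the plane carrying $e$ and recognise one slice as a prism over a planar double castle, whose Lemma~\ref{l4:mono} guards run parallel to the prism axis into the other slice.

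The delicate point, where I would concentrate the effort, is to squeeze out exactly the one wasted guard: one must verify that the guards produced for the fused or sliced monotone piece already see, along the prism axis, all of the remaining prism castle — so that it costs only $r_j$ further guards, not $r_j+1$ — which may require re-running the proof of Lemma~\ref{l4:mono} with a deliberate choice of which reflex edges receive guards. A few degenerate sub-cases need separate attention as well: when a constituent castle is a single brick, or when a cut has merely lowered the genus without disconnecting $\mathcal P$ (Remark~\ref{r4:degen}), the fused piece may fail to be a prism, and one must then check monotonicity and the applicability of the guarding scheme directly.
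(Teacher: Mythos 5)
Your case skeleton coincides with the paper's: split on which of the two constituent castles $\mathcal C_1,\mathcal C_2$ are prisms, dispose of the ``neither'' case with Lemma~\ref{l4:castle} alone ($r_1+r_2$ guards), the ``exactly one'' case with Lemma~\ref{l4:castle} plus Lemma~\ref{l4:mono} (or a guard on $e$ when the prism is a single cuboid), and the ``both prisms, all reflex edges parallel to $e$'' case by observing that $\mathcal D$ itself is monotone. All of that is correct and is exactly what the paper does.

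The gap is in the one case you yourself flag as delicate: both castles are prisms and at least one has its reflex edges \emph{orthogonal} to $e$. The idea that closes this case is not a careful re-selection of guards inside Lemma~\ref{l4:mono}, but a visibility fact you never state: \emph{a prism castle whose reflex edges are all orthogonal to the reflex edge $e$ of its base contact rectangle is entirely visible from $e$} (this is Figure~\ref{fig4:monoguard}, and it is already needed inside the proof of Lemma~\ref{l4:castle}, which you invoke). With it, the mixed case costs one guard on $e$ (covering the orthogonal castle, say $\mathcal C_2$, in full) plus $r_1+1$ guards for the parallel prism $\mathcal C_1$ via Lemma~\ref{l4:mono}, and $r_1+2\leqslant r_1+r_2+1$ because $r_2\geqslant 1$; the doubly-orthogonal case costs a single guard on $e$. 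Your proposed substitutes do not obviously work: fusing $\mathcal C_1\cup B_2$ into a monotone piece costs $r_1+1$ guards, but what remains of $\mathcal C_2$ is a pair of prism sub-castles whose recursive treatment costs up to $r_2+1$ more, overshooting by one — precisely the guard you hoped to ``squeeze out'' — and nothing in Lemma~\ref{l4:mono} as stated lets you certify that its guards see across the junction into the other castle. Likewise, slicing by the plane of $e$ replaces a one-guard case with a genuinely harder decomposition. So the proposal is a correct plan for all but the hardest subcase, where the essential idea is missing rather than merely deferred.
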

\begin{proof}
Let $\mathcal D$ be a double castle with $r$ reflex edges, made of a castle $\mathcal C_1$ and an upside-down castle $\mathcal C_2$ having $2r_1$ and $2r_2$ reflex edges, respectively. Let $e$ be the reflex edge lying on the contact rectangle between the two castles. Because $r=2r_1+2r_2+1$, (\ref{eq4:ccc}) can be rewritten as
\begin{equation}
r_1+r_2+1.
\label{eq4:ccc2}
\end{equation}

We distinguish three cases.
\begin{itemize}
\item If neither $\mathcal C_1$ nor $\mathcal C_2$ is a prism, by Lemma~\ref{l4:castle} they can be guarded by at most $r_1$ and $r_2$ reflex edge guards, respectively. Hence, (\ref{eq4:ccc2}) is matched.

\item If $\mathcal C_1$ is a prism and $\mathcal C_2$ is not (the symmetric case is analogous), by Lemma~\ref{l4:castle} $\mathcal C_2$ can be guarded by at most $r_2$ reflex edge guards. We have two sub-cases.
\begin{itemize}
\item If $r_1=0$, $\mathcal C_1$ is a cuboid and can be guarded by $e$. In total, $r_2+1$ guards are assigned, which matches (\ref{eq4:ccc2}).

\item If $r_1>0$, $\mathcal C_1$ can be guarded by $r_1+1$ guards, by Lemma~\ref{l4:mono}. Together with the previous $r_2$ guards, these match the upper bound (\ref{eq4:ccc2}).
\end{itemize}

\item If both $\mathcal C_1$ and $\mathcal C_2$ are prisms, we have three sub-cases.
\begin{itemize}
\item If the reflex edges of $\mathcal C_1$ and $\mathcal C_2$ are parallel to $e$, then $\mathcal D$ is a monotone orthogonal polyhedron with $2r_1+2r_2+1$ reflex edges, and according to Lemma~\ref{l4:mono} it is guardable by $r_1+r_2+1$ reflex edge guards, which agrees with (\ref{eq4:ccc2}). This holds also if $r_1=0$ or $r_2=0$.

\item If the reflex edges of $\mathcal C_1$ are orthogonal to $e$ and the reflex edges of $\mathcal C_2$ are parallel to $e$ (the symmetric case is analogous), then we may assume that $r_1\geqslant 1$ (otherwise we fall back to the previous case). We assign one guard to $e$ in order to guard $\mathcal C_1$ (cf.~Figure~\ref{fig4:monoguard}) and $r_2+1$ guards to reflex edges in $\mathcal C_2$, in accordance with Lemma~\ref{fig4:mono}. In total, we assigned $1+r_2+1 \leqslant r_1+r_2+1$ guards, thus matching (\ref{eq4:ccc2}).

\item Finally, if the reflex edges of both $\mathcal C_1$ and $\mathcal C_2$ are orthogonal to $e$, then a single guard assigned to $e$ sees all of $\mathcal D$ (refer to Figure~\ref{fig4:monodouble}), which obviously matches (\ref{eq4:ccc2}).
\end{itemize}
\end{itemize}
The bound is matched in all cases, which concludes the proof.
\end{proof}

\begin{figure}[h!]
\centering
\includegraphics[width=0.5\linewidth]{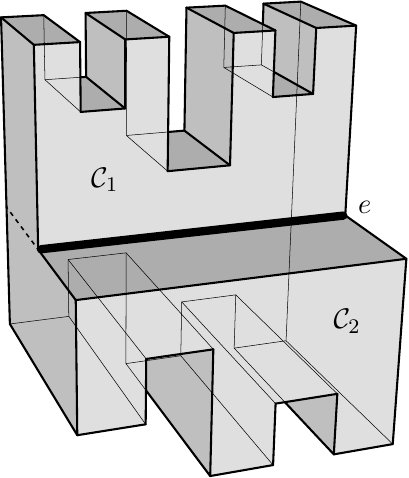}
\caption{Edge $e$ guards all of $\mathcal C_1$ and $\mathcal C_2$.}
\label{fig4:monodouble}
\end{figure}

Now that we know how to guard double castles, we can move on to more general shapes, such as stacks. This is the last step before our main theorem, and it generalizes~\cite[Lemmas~2.13--2.15]{art}.

Recall that a stack has only primitive contact rectangles. If two neighboring bricks $\mathcal B$ and $\mathcal B'$ share a type-(d) (resp.\ type-(i)) contact rectangle (refer to Figure~\ref{fig4:1}), and $\mathcal B$ lies below (resp.\ above) $\mathcal B'$, then we say that $\mathcal B$ is a \emph{parent} of $\mathcal B'$, and $\mathcal B'$ is a \emph{child} of $\mathcal B$. It follows that a brick in a stack can have at most two children above and two children below, and shares exactly one reflex edge with each child (see Figure~\ref{fig4:primitive}). Moreover, if a brick has one parent above (resp.\ below), then it has no other neighboring bricks above (resp.\ below). Finally, if a brick $\mathcal B$ has exactly one child $\mathcal B'$ on one side (regardless of the number of neighboring bricks on the other side), then the reflex edge shared by $\mathcal B$ and $\mathcal B'$ is said to be \emph{isolated}. It follows that the number of reflex edges in a stack and the number of isolated reflex edges have the same parity.

\begin{figure}[h!]
\centering
\subfigure[]{\includegraphics[width=0.25\linewidth]{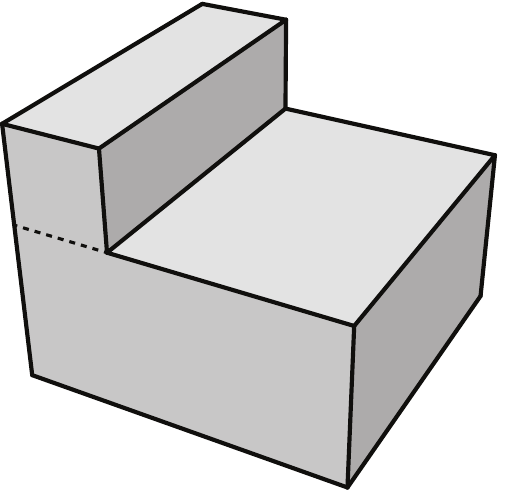}}\qquad
\subfigure[]{\includegraphics[width=0.25\linewidth]{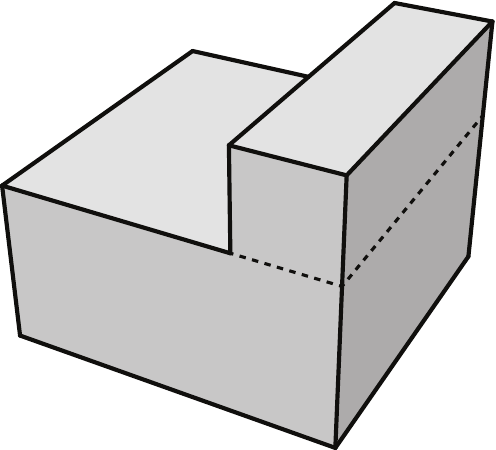}}\qquad
\subfigure[]{\includegraphics[width=0.25\linewidth]{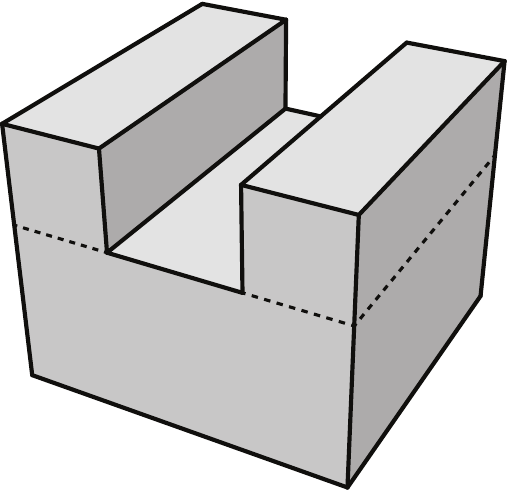}}\\ \vspace{0.25cm}
\subfigure[]{\includegraphics[width=0.25\linewidth]{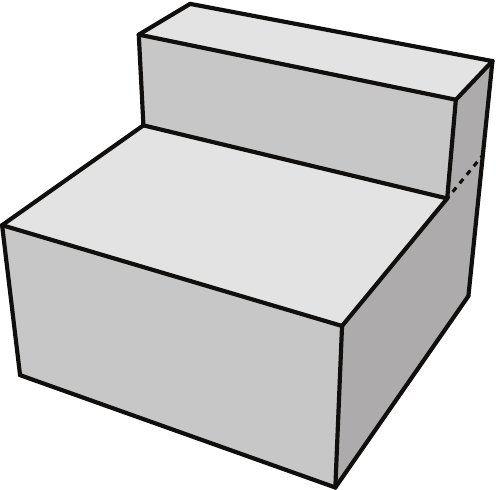}}\qquad
\subfigure[]{\includegraphics[width=0.25\linewidth]{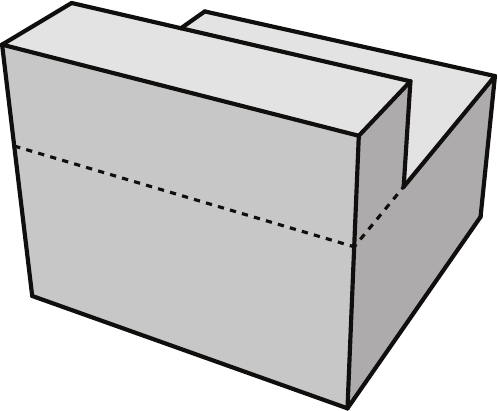}}\qquad
\subfigure[]{\includegraphics[width=0.25\linewidth]{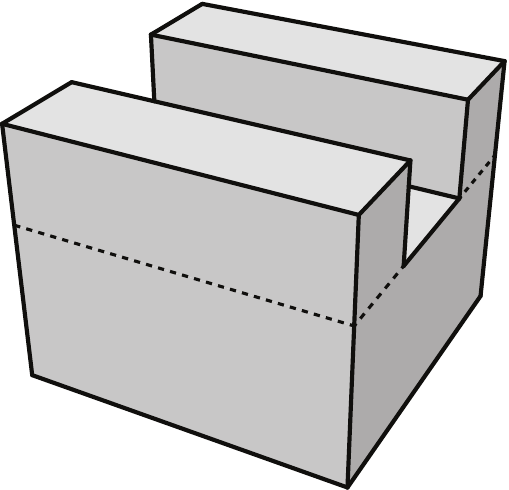}}
\caption{Different configurations for the upper children of a brick in a stack}
\label{fig4:primitive}
\end{figure}

\begin{lemma}\label{l4:stack}
Any stack with $r>0$ reflex edges and genus $g$ is guardable by at most
\begin{equation}
\left\lfloor \frac{r-g}{2} \right\rfloor+1
\label{eq4:ddd}
\end{equation}
(open or closed) reflex edge guards.
\end{lemma}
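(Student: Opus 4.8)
The plan is a well-founded induction on the number $r$ of reflex edges, the aim being to reduce every stack to the castles and double castles of Lemmas~\ref{l4:castle} and~\ref{l4:dcastle} (and to prisms, which are governed directly by O'Rourke's polygon theorem through extrusion: a reflex vertex of the cross-section becomes a reflex edge of the prism, and a reflex edge guard dominates, slice by slice, the corresponding reflex-vertex guard). Openness of the guards never matters here, because an open reflex edge guard still sees the two bricks meeting along its edge, each of them being a convex cuboid whose boundary contains that edge.

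I would first dispose of the genus. The adjacency graph of the bricks is connected with cyclomatic number $g$, so whenever $g>0$ it has a cycle; cutting along the contact rectangle of a primitive contact on that cycle leaves the polyhedron connected and yields a (possibly degenerate --- cf.\ Remark~\ref{r4:degen}) stack $\mathcal S'$ with $r-1$ reflex edges and genus $g-1$, hence the \emph{same} target value $\lfloor (r-g)/2\rfloor+1$. Since the slit produced by the cut can only obstruct lines of sight, $\mathcal S'$ has pointwise no larger visibility than $\mathcal S$, so any guard family realizing the bound for $\mathcal S'$ --- available by the induction hypothesis, as $r-1>0$ in this situation --- guards $\mathcal S$ as well. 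We may therefore assume $g=0$, i.e.\ that the bricks form a tree.

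Suppose then $g=0$. If $\mathcal S$ is a prism, a castle, or a double castle we are done by the results just cited. Otherwise I would peel off, along a single contact rectangle, the largest pendant sub-stack that is a castle or a double castle --- for instance, in a ``path'' of bricks the two outermost bricks already form a pendant double castle --- splitting $\mathcal S$ into that piece $\mathcal S_1$ and a strictly smaller genus-$0$ stack $\mathcal S_2$. One guards $\mathcal S_1$ by Lemma~\ref{l4:castle} or Lemma~\ref{l4:dcastle} and $\mathcal S_2$ by induction, and the union of the two guard families covers $\mathcal S$, because reattaching $\mathcal S_1$ only enlarges visibility and each of $\mathcal S_1,\mathcal S_2$ already covers the brick on its own side of the reopened contact. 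The arithmetic then closes because a non-prism castle with $2k$ reflex edges costs only $k$ guards --- one below the naive bound $\lfloor 2k/2\rfloor+1$ --- and this saving is exactly what pays for the reflex edge destroyed at the cut; a parallel floor-function computation handles the double-castle peel and, when $r$ is even, the peel of a single leaf cuboid. If no castle or double castle can be peeled but $r$ is odd, then the parity fact that the number of isolated reflex edges equals $r$ modulo $2$ guarantees an isolated reflex edge, and following it toward the leaves of the brick tree exposes a pendant castle or pendant double castle --- possibly ``upside down'', which by symmetry is still governed by Lemma~\ref{l4:castle} --- that can be peeled after all.

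The crux, and the part I expect to demand the most work, is precisely this last structural claim for $g=0$: that a stack which is neither a prism, nor a castle, nor a double castle always admits a pendant castle, a pendant double castle, or --- when $r$ is even --- a pendant cuboid, all detachable along a single contact rectangle. Proving it calls for a case analysis on the local pattern of upper and lower children at a well-chosen (say, deepest) brick, in the spirit of Figure~\ref{fig4:primitive}, recognizing in each pattern a castle, a double castle, or an isolated reflex edge; and in every case one must still check that the guards --- whether inherited from Lemmas~\ref{l4:castle} and~\ref{l4:dcastle} or placed anew on the cut's reflex edge --- cover the newly exposed contact rectangle and that no guard is ever lost to rounding, the parity of the isolated-edge count being the lever that forces a non-trivial pendant piece to exist when $r$ is odd.
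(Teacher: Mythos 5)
Your genus reduction, your treatment of even $r$ (peeling a leaf cuboid is a special case of cutting at any contact rectangle, which works because the two sides then have reflex-edge counts of opposite parity), and your terminal double-castle case all match the paper. The divergence --- and the gap --- is in the case $g=0$, $r$ odd, $\mathcal S$ not a castle/double castle/prism. You reduce everything to the existence of a \emph{pendant non-prism castle or pendant double castle} detachable along one contact rectangle, you correctly identify this as the crux, and you do not prove it; what you offer instead (``following an isolated reflex edge toward the leaves exposes a pendant castle or double castle'') is not a proof and is in fact unreliable: the paper's argument that the sub-stack above an isolated reflex edge is a castle depends essentially on the hypothesis that \emph{no} brick has exactly one neighbour above and one below, and in the case you are treating that hypothesis fails by assumption. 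Worse, your arithmetic for the castle peel only closes when the pendant castle is \emph{not} a prism (a prism castle with $2k>0$ reflex edges costs $k+1$ guards by Lemma~\ref{l4:mono}, not $k$, and you go over budget by one); whether a pendant castle is a prism is a geometric condition on the orientations of its reflex edges that the adversary controls, so your structural claim would have to be proved in a form that either guarantees non-prism pendant pieces or re-imports the monotone-merging case analysis from the proof of Lemma~\ref{l4:castle} at the level of the whole stack. None of that is done.

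The idea you are missing is the paper's notion of an \emph{odd cut}: a contact rectangle splitting a simply connected stack into pieces $\mathcal S_1$ and $\mathcal S_2$ with $2r_1+1$ and $r_2$ reflex edges. Since the cut destroys exactly one reflex edge, $r=2r_1+r_2+2$, and the inductive bounds $r_1+1$ and $\lfloor r_2/2\rfloor+1$ add up to exactly $\lfloor r/2\rfloor+1$ --- the odd side absorbs the destroyed edge through the floor, so \emph{neither piece needs to be a castle, double castle, or anything else}; plain induction suffices on both. One then only has to show that an odd cut exists unless $\mathcal S$ is a double castle: if $r$ is even every cut is odd, and if some brick has exactly one neighbour above and one below, a two-line parity argument shows one of its two contacts is an odd cut; in the single remaining configuration ($r$ odd, no such brick) the paper proves $\mathcal S$ \emph{is} a double castle, which is the only place Lemmas~\ref{l4:castle} and~\ref{l4:dcastle} are invoked. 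Your peeling strategy asks for strictly more structure from the cut-off piece than the induction actually requires, and that extra demand is precisely what you cannot deliver.
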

\begin{proof}
Our proof is by well-founded induction on $r$. Following O'Rourke (see~\cite[Chapter~2.5]{art}), we say that a contact rectangle $R$ in a simply connected stack $\mathcal S$ yields an \emph{odd cut} if $R$ partitions $\mathcal S$ into two stacks $\mathcal S_1$ and $\mathcal S_2$, one of which has an odd number of reflex edges. As an example, observe that every contact rectangle in a castle yields an odd cut (cf.~Figure~\ref{fig4:castle}). The presence of odd cuts in a simply connected stack is very desirable, in that it permits to successfully apply the inductive hypothesis.  Indeed, let $2r_1+1$ and $r_2$ be the number of reflex edges of $\mathcal S_1$ and $\mathcal S_2$ (the symmetric case is analogous). Then $r=2r_1+r_2+2$, because the cut resolves exactly one reflex edge. Recall that $\mathcal S$ is simply connected, and so are $\mathcal S_1$ and $\mathcal S_2$. Since $\mathcal S_1$ is non-convex, we can apply the inductive hypothesis on it, guarding it with at most
$$\left\lfloor \frac{2r_1+1}{2} \right\rfloor+1=r_1+1$$
reflex edge guards. Similarly, if $\mathcal S_2$ is non-convex, we can guard it with at most
\begin{equation}
\left\lfloor \frac{r_2}{2} \right\rfloor+1
\label{eq4:ddd2}
\end{equation}
reflex edge guards. On the other hand, if $\mathcal S_2$ is a cuboid, we can guard it by assigning a guard to the reflex edge lying on $R$. In this case, $r_2=0$, and hence (\ref{eq4:ddd2}) still holds. As a result, we have guarded all of $\mathcal S$ with at most
$$r_1+1+\left\lfloor \frac{r_2}{2} \right\rfloor+1 = \left\lfloor \frac{2r_1+r_2+2}{2} \right\rfloor+1=\left\lfloor \frac{r}{2} \right\rfloor+1$$
reflex edge guards. We stress that it makes sense to talk about odd cuts only for stacks of genus 0 (this fact will be used later in the proof).

Now, let $\mathcal S$ be a stack with exactly $r>0$ reflex edges and genus $g$, and assume that the lemma's statement holds for all non-convex stacks with fewer than $r$ reflex edges. There are four cases to consider.

\begin{itemize}
\item Let $g>0$. Note that each cut along a contact rectangle either disconnects $\mathcal S$ or lowers its genus by 1, and each cut resolves exactly one contact rectangle without creating or modifying other contact rectangles. Because $\mathcal S$ is partitioned by contact rectangles into cuboids, whose genus is zero, it follows that there exists at least one contact rectangle $R$ such that cutting through $R$ yields a (degenerate, see Remark~\ref{r4:degen}) stack $\mathcal S'$ with $r'=r-1$ reflex edges and genus $g'=g-1$. Observe that $\mathcal S'$ is non-convex, because it is made of at least two bricks, and so we can apply the inductive hypothesis on it and guard it with at most
$$\left\lfloor \frac{r'-g'}{2} \right\rfloor+1=\left\lfloor \frac{r-g}{2} \right\rfloor+1$$
reflex edge guards.

\item If $g=0$ and $r>0$ is even, then any contact rectangle yields an odd cut. Indeed, cutting $\mathcal S$ along a contact rectangle resolves exactly one reflex edge and produces two stacks (because $\mathcal S$ is simply connected). The amounts of reflex edges in these two stacks must have opposite parity, because their sum must be odd. Hence, one of them is odd.

\item Let $g=0$, and let $\mathcal S$ have a brick $\mathcal B$ with exactly one neighboring brick above and exactly one below (Figure~\ref{fig4:oneone} sketches one possible configuration for $\mathcal B$). We show that one of the two contact rectangles bordering $\mathcal B$ yields an odd cut. Let $R_1$ be the upper contact rectangle and $R_2$ be the lower one. If $R_1$ does not yield an odd cut, the stack above $R_1$ has an even number of reflex edges. But then, the stack above $R_2$, which additionally includes $\mathcal B$ and the reflex edge belonging to $R_1$, has an odd number of reflex edges. It follows that either $R_1$ or $R_2$ yields an odd cut.

\begin{figure}[h!]
\centering
\includegraphics[width=0.35\linewidth]{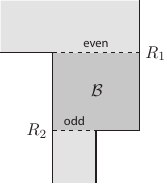}
\caption{Sketch of a brick with one neighbor above and one below}
\label{fig4:oneone}
\end{figure}

\item If none of the above are satisfied, we show that $\mathcal S$ must be a double castle, and hence our claim holds by Lemma~\ref{l4:dcastle}. Let then $r$ be odd, so that $\mathcal S$ has at least one isolated reflex edge corresponding to a contact rectangle $R$. Additionally, let no brick of $\mathcal S$ have exactly one neighbor above and one neighbor below. We show that $\mathcal S'$, i.e., the stack above $R$, is a castle. (By a symmetric argument, the stack below $R$ will be an upside-down castle, and $\mathcal S$ will then be a double castle.) Although this is easy to see (build the castle from bottom to top, without adding bricks that have only one neighbor below and one above), here follows a formal proof.

Let $d(\mathcal B)$ be the minimum number of bricks one has to traverse to reach brick $\mathcal B$ from $R$ (while always staying inside $\mathcal S'$), and let $\mathcal S'_h$ be the set of bricks $\mathcal B$ in $\mathcal S'$ such that $d(\mathcal B)\leqslant h$. We prove by induction on $h$ that $\mathcal S'_h$ is a castle whose base brick contains $R$.

The claim is true for $h=0$, because the brick just above $R$ (call it $\widetilde{\mathcal B}$) is the only one in contact with $R$, and a brick is indeed a castle.

Observe that no brick in $\mathcal S'$ can be attached to the bottom face of $\widetilde{\mathcal B}$, because the reflex edge of $\mathcal S$ corresponding to $R$ is isolated. Hence, $\widetilde{\mathcal B}$ must have either zero or two neighbors in $\mathcal S'$, and both of them are above. It follows that $\mathcal S'_1$ is a castle, as well.

Let now $h\geqslant 1$, assume that $\mathcal S'_h$ is a castle whose base brick $\widetilde{\mathcal B}$ contains $R$, and let us show that the same holds for $\mathcal S'_{h+1}$. Any brick $\mathcal B$ such that $d(\mathcal B)=h+1$ must be attached to some brick $\mathcal B'$ of $\mathcal S'_h$ such that $d(\mathcal B')=h$ (see Figure~\ref{fig4:stackcastle}). It is straightforward to see that any such $\mathcal B'$ has one parent brick below and no bricks above, in $\mathcal S'_h$. Hence, $\mathcal B$ can only be attached on top of $\mathcal B'$. Because $\mathcal B'$ cannot have only one top and one bottom neighbor in $\mathcal S'$, it follows that $\mathcal B$ cannot be a parent of $\mathcal B'$, but a child. Any other brick $\mathcal B''$ that is attached on top of $\mathcal B'$ in $\mathcal S'$ must also belong to $\mathcal S'_{h+1}$, by definition. As a consequence, one such brick $\mathcal B''$ must indeed be in $\mathcal S'_{h+1}$, otherwise $\mathcal B'$ would once again have only one top and one bottom neighbor in $\mathcal S'$. No new brick is attached to $\widetilde{\mathcal B}$, which remains the base of $\mathcal S'_{h+1}$. Thus the base brick of $\mathcal S'_{h+1}$ contains $R$, proving our claim and concluding the inductive proof.

\begin{figure}[h!]
\centering
\includegraphics[width=0.4\linewidth]{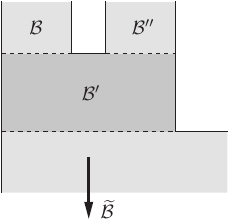}
\caption{Sketch of a brick in a castle}
\label{fig4:stackcastle}
\end{figure}

Since $\mathcal S'$ is connected and contains only a finite number of bricks, for a large-enough $h$ we have $\mathcal S'=\mathcal S'_h$, implying that $\mathcal S'$ is a castle.
\end{itemize}
Hence the lemma's statement holds for $\mathcal S$, which concludes the proof.
\end{proof}

In the next theorem, we give an upper bound on the number of guards for a 2-reflex orthogonal polyhedron in terms of $r$, $g$ and $b$, where $b$ is the number of collars. The presence of $b$ may look redundant (as it just contributes to lowering the bound), but we will actually have to carry this parameter along to the next section, in order to prove Theorem~\ref{t4:2orthoe}.

\begin{theorem}\label{t4:2orthor}
Any 2-reflex orthogonal polyhedron with $r>0$ reflex edges, $b$ collars and genus $g$ is guardable by at most
\begin{equation}
\left\lfloor \frac{r-g}{2} \right\rfloor -b +1
\label{eq4:main}
\end{equation}
(open or closed) reflex edge guards.
\end{theorem}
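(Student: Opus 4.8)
\emph{Proof idea.} I would argue by well-founded induction on $r$ --- the same induction parameter as in Lemma~\ref{l4:stack} --- reducing a general polyhedron to the already-settled cases by cutting it along a single contact rectangle. If $\mathcal P$ is a \emph{stack}, then $b=0$ (a stack has only primitive contacts) and (\ref{eq4:main}) is precisely Lemma~\ref{l4:stack}. Otherwise $\mathcal P$, not being a stack, possesses a non-primitive contact rectangle $R$; from the classification in Figure~\ref{fig4:1} one reads off three facts: cutting along $R$ resolves $\delta:=r-r'\geqslant 2$ reflex edges; $\delta=4$ whenever $R$ is a collar; and, since the cut modifies no other contact rectangle, the number of collars drops by exactly one if $R$ is a collar and is unchanged otherwise. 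The cut either splits $\mathcal P$ into two (possibly degenerate, cf.\ Remark~\ref{r4:degen}) 2-reflex orthogonal polyhedra $\mathcal P_1,\mathcal P_2$ with $g=g_1+g_2$, or it lowers the genus and leaves a single such polyhedron $\mathcal P'$ of genus $g-1$; in either case the new polyhedra have fewer reflex edges than $\mathcal P$, so the inductive hypothesis is available for their non-convex parts.

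The only geometric fact needed beyond the earlier lemmas is that, after the cut, $R$ becomes part of a face of \emph{each} resulting piece, so every reflex edge of $\mathcal P$ bounding $R$ --- a legitimate guard location for $\mathcal P$ --- lies on the boundary of every piece; hence one guard placed on such an edge sees all of any piece that happens to be a cuboid, by convexity, and this holds even for open guards, since convex visibility ignores endpoints. I would thus dispose of each convex piece with one such guard and handle each non-convex piece by the inductive hypothesis.

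It then remains to add up and check the floors. With both $\mathcal P_1,\mathcal P_2$ non-convex the total is at most $\lfloor (r_1-g_1)/2\rfloor - b_1 + \lfloor (r_2-g_2)/2\rfloor - b_2 + 2$, and since $r-g=(r_1-g_1)+(r_2-g_2)+\delta$ and floors are superadditive, $\lfloor (r-g)/2\rfloor \geqslant \lfloor (r_1-g_1)/2\rfloor+\lfloor (r_2-g_2)/2\rfloor+\lfloor \delta/2\rfloor$. If $R$ is not a collar, then $b=b_1+b_2$ and $\lfloor \delta/2\rfloor\geqslant 1$ already closes the bound; if $R$ is a collar, then $\delta=4$, so $\lfloor \delta/2\rfloor=2$ while $b=b_1+b_2+1$, and the extra unit exactly compensates the destroyed collar. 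A cuboid piece, having $r_i=0$, falls outside the inductive hypothesis, but the guard we placed on a reflex edge bounding $R$ covers it at unit cost --- which is exactly the value $\lfloor (r_i-g_i)/2\rfloor-b_i+1$ takes --- so the same summation goes through; and when the cut only lowers the genus, the slack is supplied by $g'=g-1$ in place of $\delta$. All the building-block lemmas (Lemmas~\ref{l4:mono}--\ref{l4:stack}) are already stated for open guards, so the final bound holds in the open setting as well.

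I expect the main obstacle to be bookkeeping rather than any single deep step: one must extract from Figure~\ref{fig4:1} the three facts about $\delta$ and the collar count exactly as stated, verify the floor inequalities in every subcase --- including the small ones in which one or both pieces degenerate to a cuboid and the inductive hypothesis cannot be invoked --- and confirm that the degenerate genus-lowering reductions are admissible, i.e.\ that Remark~\ref{r4:degen} really does allow applying the inductive hypothesis to a self-intersecting intermediate polyhedron. None of this is conceptually hard, the genuine visibility content having already been isolated in Lemmas~\ref{l4:mono}--\ref{l4:stack}; but it is where the actual work lies.
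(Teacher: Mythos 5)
Your proposal is correct and follows essentially the same route as the paper: reduce to stacks (Lemma~\ref{l4:stack}) by cutting at non-primitive contact rectangles, read off $r-r'$ and the collar count from Figure~\ref{fig4:1}, dispose of cuboid pieces with a guard on a reflex edge of $R$, and check the floor arithmetic in the collar/non-collar and disconnecting/genus-lowering cases. The only cosmetic difference is the induction parameter ($r$ rather than the number of non-primitive contact rectangles), which changes nothing of substance.
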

\begin{proof}
We proceed by induction on the number of non-primitive contact rectangles. The base case is given by non-convex stacks, for which (\ref{eq4:main}) holds due to Lemma~\ref{l4:stack} and the fact that $b=0$.

For the inductive step, let $\mathcal P$ be a 2-reflex orthogonal polyhedron with $r>0$ reflex edges, $b$ collars, genus $g$, and a non-primitive contact rectangle $R$. We cut $\mathcal P$ through $R$, thus resolving one non-primitive contact rectangle, and we distinguish two cases.
\begin{itemize}
\item If the cut does not disconnect $\mathcal P$, then it lowers its genus by 1. Let $\mathcal P'$ be the resulting polyhedron (which is degenerate, see Remark~\ref{r4:degen}), and let $g'=g-1$ be its genus. By inductive hypothesis, $\mathcal P'$ is guardable by
\begin{equation}
\left\lfloor \frac{r'-g'}{2} \right\rfloor -b'+1
\label{eq4:main2}
\end{equation}
guards, where $r'$ and $b'$ are, respectively, the number of reflex edges and collars of $\mathcal P'$.  Two sub-cases arise.
\begin{itemize}
\item If $R$ is a collar, then $r'=r-4$ and $b'=b-1$. Plugging these into (\ref{eq4:main2}), we obtain that $\mathcal P\setminus R$ is guardable by
$$\left\lfloor \frac{r-4-g+1}{2} \right\rfloor -b+1+1 \leqslant \left\lfloor \frac{r-g}{2} \right\rfloor -b+1$$
reflex edge guards.

\item If $R$ is not a collar, then $b'=b$. Because $R$ is not primitive, $r'\leqslant r-2$ (refer to Figure~\ref{fig4:1}). Hence, $\mathcal P$ is guardable by at most
$$\left\lfloor \frac{r-2-g+1}{2} \right\rfloor -b+1 \leqslant \left\lfloor \frac{r-g}{2} \right\rfloor -b+1$$
reflex edge guards.
\end{itemize}

\item If the cut disconnects $\mathcal P$ into $\mathcal P_1$ and $\mathcal P_2$, then $g_1+g_2=g$, where $g_1$ (resp.\ $g_2$) is the genus of $\mathcal P_1$ (resp.\ $\mathcal P_2$).  Let $r_1$ and $b_1$ (resp.\ $r_2$ and $b_2$) be, respectively,  the number of reflex edges and collars of $\mathcal P_1$ (resp.\ $\mathcal P_2$). Two sub-cases arise.

\begin{itemize}
\item If $R$ is a collar, then $r_1+r_2=r-4$ and $b_1+b_2=b-1$. If $r_1>0$, then we can apply the inductive hypothesis to $\mathcal P_1$ and guard it with at most
\begin{equation}
\left\lfloor \frac{r_1-g_1}{2} \right\rfloor -b_1+1
\label{eq4:main3}
\end{equation}
guards. Otherwise, $\mathcal P_1$ is a cuboid, and we can guard it by assigning a guard to any edge of $R$ (they are all reflex). In this case, $r_1=g_1=b_1=0$, and (\ref{eq4:main3}) is still matched. We proceed similarly with $\mathcal P_2$, and thus we have assigned a combined number of guards that is at most
$$\left\lfloor \frac{r_1-g_1}{2} \right\rfloor + \left\lfloor \frac{r_2-g_2}{2} \right\rfloor -b_1-b_2+2 \leqslant \left\lfloor \frac{r-4-g}{2} \right\rfloor -b+3 = \left\lfloor \frac{r-g}{2} \right\rfloor -b+1.$$
This many guards are then sufficient to guard $\mathcal P$.

\item If $R$ is not a collar, then $b_1+b_2=b$. Because $R$ is not primitive, $r_1+r_2\leqslant r-2$ (refer to Figure~\ref{fig4:1}). Once again, if $r_1>0$, then we can apply the inductive hypothesis to $\mathcal P_1$ and guard it with a number of guards that is bounded by (\ref{eq4:main3}). Otherwise, $\mathcal P_1$ is a cuboid, and we can guard it by assigning a guard to any reflex edge of $R$ (there is at least one). In this case, $r_1=g_1=b_1=0$, and (\ref{eq4:main3}) is still matched. Again, we do the same with $\mathcal P_2$, and thus we have assigned a combined number of guards that is at most
$$\left\lfloor \frac{r_1-g_1}{2} \right\rfloor + \left\lfloor \frac{r_2-g_2}{2} \right\rfloor -b_1-b_2+2 \leqslant \left\lfloor \frac{r-2-g}{2} \right\rfloor -b+2 = \left\lfloor \frac{r-g}{2} \right\rfloor -b+1,$$
and $\mathcal P$ is guarded.
\end{itemize}
\end{itemize}
We remark that also the rectangle $R$ is guarded in every case, because it is contained in $\mathcal P'$, $\mathcal P_1$, and $\mathcal P_2$, as defined above.
\end{proof}

Note that Theorem~\ref{t4:2orthor} implies that Conjecture~\ref{con4:2} holds for 2-reflex orthogonal polyhedra.

\subsection*{Lower bounds}

For $g=0$, the upper bound given in Theorem~\ref{t4:2orthor} is tight, as Figure~\ref{fig4:3} implies. For greater values of $g$ we have no matching lower bounds, and recall that the optimality problem is open even in the case of 2-dimensional polygons with holes.

\begin{figure}[h!]
\centering
\includegraphics[width=0.85\linewidth]{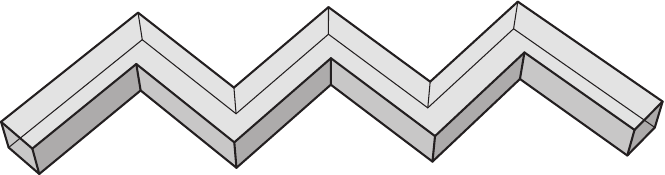}
\caption{2-reflex orthogonal polyhedron requiring $\lfloor r/2 \rfloor +1 = \lfloor m/12\rfloor$ reflex edge guards}
\label{fig4:3}
\end{figure}

Moreover, all the upper bounds given in Lemmas~\ref{l4:mono}--\ref{l4:stack} are tight, as well. Note that the polyhedron in Figure~\ref{fig4:3} is monotone and is also a stack, and therefore the bounds given by Lemmas~\ref{l4:mono} and~\ref{l4:stack} (for $g=0$) are matched. 

\begin{figure}[h!]
\centering
\includegraphics[width=0.5\linewidth]{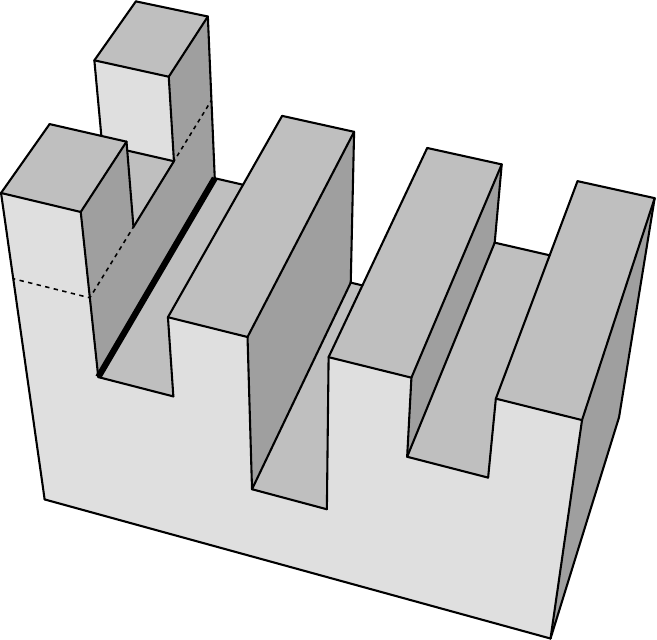}
\caption{Non-prism castle with $2r$ reflex edges that requires $r$ reflex edge guards. The edge in bold also guards the two top bricks.}
\label{fig4:4}
\end{figure}

The polyhedra in Figures~\ref{fig4:4} and~\ref{fig4:5} provide matching lower-bound examples for Lemmas~\ref{l4:castle} and~\ref{l4:dcastle}, respectively, and they are easy to generalize to arbitrarily large values of $r$.

\begin{figure}[h!]
\centering
\includegraphics[width=0.6\linewidth]{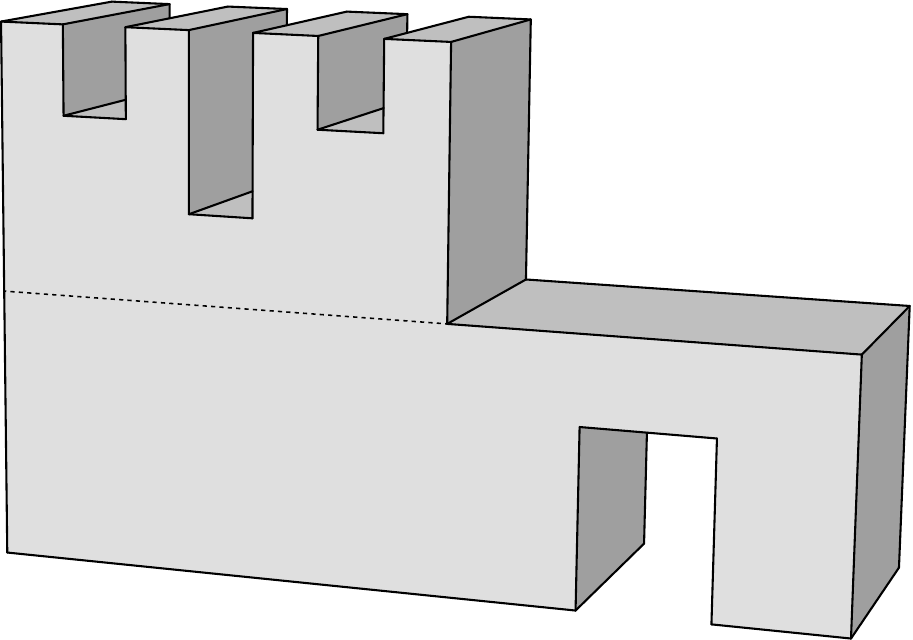}
\caption{Double castle requiring $\lfloor r/2 \rfloor +1$ reflex edge guards}
\label{fig4:5}
\end{figure}

\section{Bounds in terms of $m$}\label{s:4}

In this section we bound the number of reflex edge guards required to guard a 2-reflex orthogonal polyhedron in terms of $m$, as opposed to $r$. We also prove that Conjecture~\ref{con4:1} holds for all stacks. Rather than providing a radically new construction, we bound $m$ with respect to $r$, and then we apply Theorem~\ref{t4:2orthor}.

Note that a naive application of this method would not improve on the state of the art. Indeed, the sharpest possible inequality between $m$ and $r$ (involving also the genus $g$) is
\begin{equation}
m\geqslant 3r-12g+12,
\label{eq4:weakineq}
\end{equation}
which yields an upper bound of
$$\frac{m}{6}+O(g)$$
guards when applied to Theorem~\ref{t4:2orthor}. This result was already obtained by Urrutia in~\cite{urrutia2000} and improved in~\cite{cccg}.

To get around this, we will refine (\ref{eq4:weakineq}) by introducing the number of collars as an additional parameter.

\begin{lemma}\label{t4:ineq}
In every 2-reflex orthogonal polyhedron, the number of edges $m$, the number of reflex edges $r$, the number of collars $b$, and the genus $g$ satisfy the inequality
$$m\geqslant 4r-12g-4b+12.$$
\end{lemma}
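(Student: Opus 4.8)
The plan is to argue by induction on the number of contact rectangles of the polyhedron. In the base case there are none, so the polyhedron is a single brick (a cuboid); then $m=12$ and $r=g=b=0$, and the inequality reads $12\geqslant 12$. More generally, any cuboid satisfies the inequality with equality, which lets us treat cuboids that arise during the induction without any special argument. For the inductive step, let $\mathcal P$ be a 2-reflex orthogonal polyhedron with at least one contact rectangle; fix an arbitrary contact rectangle $R$ and cut $\mathcal P$ through $R$. As recalled in Section~\ref{s:2}, this cut either disconnects $\mathcal P$ into two polyhedra $\mathcal P_1,\mathcal P_2$, in which case $g_1+g_2=g$, or else leaves a connected (possibly degenerate, cf.\ Remark~\ref{r4:degen}) polyhedron $\mathcal P'$ whose genus is $g'=g-1$. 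In every case the resulting piece(s) have strictly fewer contact rectangles than $\mathcal P$, so the inductive hypothesis applies to each of them.

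The substance of the proof is the bookkeeping of how $m$, $r$ and $b$ change under the cut. Since the cut resolves only the reflex edges bordering $R$ and neither creates nor modifies any other contact rectangle, the number of collars decreases by $1$ if $R$ is a collar and is unchanged otherwise. Writing $m'$ and $r'$ for the total numbers of edges and reflex edges after the cut, the differences $m-m'$ and $r-r'$ take precisely the values tabulated in Figure~\ref{fig4:1} according to the contact type of $R$. The one computational step is to extract from that figure the two facts we need: if $R$ is a collar then $m-m'=0$ and $r-r'=4$; and if $R$ is not a collar then
$$m-m' \;\geqslant\; 4(r-r')-12,$$
which is the statement, already noted in Section~\ref{s:2}, that the ratio $(m-m'+12)/(r-r')$ is at least $4$ for every non-collar contact type (a collar misses this inequality by exactly $4$, namely the amount by which one more unit of $b$ lowers the right-hand side of the claimed bound; this is what forces $b$ into the statement).

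To conclude, I combine these facts with the inductive hypothesis. If $R$ is not a collar, then the piece(s) have $b$ collars in total; applying the hypothesis to each piece and adding (or, in the genus-lowering case, applying it once) gives $m'\geqslant 4r'-12g-4b+24$, where the ``spare'' $+24$ comes either from the two summed constant terms $12+12$ or from the single $+12$ together with $-12g'=-12g+12$. Then
$$m\;\geqslant\; m'+4(r-r')-12\;\geqslant\;4r'-12g-4b+24+4(r-r')-12\;=\;4r-12g-4b+12.$$
If $R$ is a collar, then $m=m'$, $r'=r-4$, and the piece(s) have $b-1$ collars in total; feeding $r'=r-4$ and $b-1$ into the (summed) hypothesis, together with $g_1+g_2=g$ or $g'=g-1$, yields $m=m'\geqslant 4(r-4)-12g-4(b-1)+12=4r-12g-4b+12$. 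This completes the induction.

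The main obstacle I anticipate is the routine but somewhat tedious case check, from Figure~\ref{fig4:1}, that $m-m'+12\geqslant 4(r-r')$ holds for all non-collar contact types (it is tight for several, such as types (b), (g), (l), (n), and (q)); and, on the technical side, confirming that the edge and reflex-edge counts recorded in Figure~\ref{fig4:1} remain correct when the cut only lowers the genus and the intermediate polyhedron $\mathcal P'$ is degenerate, so that invoking the inductive hypothesis on $\mathcal P'$ is legitimate in the spirit of Remark~\ref{r4:degen}.
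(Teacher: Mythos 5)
Your proof is correct and follows essentially the same route as the paper's: the paper merely packages your two cases (disconnection versus genus drop) into a single induction on $r$ over collections of $k$ pairwise disjoint polyhedra, where the identity $k'-g'=k-g+1$ plays exactly the role of your ``$+24$ from two summed constant terms versus $+12$ plus $12$ from the genus decrease'' accounting, and it uses the same two facts from Figure~\ref{fig4:1} (collars give $m=m'$, $r'=r-4$; all other types give $m-m'\geqslant 4(r-r')-12$). One arithmetic nit: your displayed expression in the collar case, $4(r-4)-12g-4(b-1)+12$, evaluates to $4r-12g-4b$, which is $12$ short of the claim; the constant should be $+24$ when two hypotheses are summed, or the genus term should be $-12g'=-12g+12$ in the genus-lowering case --- both of which you do state correctly in words just before, so this is a typo rather than a gap.
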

\begin{proof}
We will prove that, for any collection of $k$ (pairwise internally disjoint) 2-reflex orthogonal polyhedra,
\begin{equation}
m\geqslant 4r-12g-4b+12k
\label{eq4:ineq}
\end{equation}
holds. Here, $m$ (resp.\ $r$, $g$, $b$) is the sum of the edges (resp.\ reflex edges, genera, collars) of the $k$ polyhedra. Then, by plugging $k=1$, we will obtain our claim.

Our proof proceeds by induction on $r$. For $r=0$ we have a collection of $k$ cuboids, each of which has 12 edges, so $m=12k$, $g=b=0$, and therefore (\ref{eq4:ineq}) holds as desired.

If $r>0$, there is at least one (horizontal) reflex edge, which is a side of the contact rectangle $R$ of two adjacent bricks, both belonging to the same polyhedron $\mathcal P$ of the collection. We can resolve this reflex edge (and up to three others) by separating the two bricks with a horizontal cut through $R$. As a consequence, either $\mathcal P$ gets partitioned in two polyhedra (in which case the new number of polyhedra is $k'=k+1$), or the genus of $\mathcal P$ decreases by 1 (in which case the new total genus is $g'=g-1$). Either way,
\begin{equation}
k'-g' = k-g+1.
\label{eq4:ineq4}
\end{equation}
By inductive hypothesis,
\begin{equation}
m'\geqslant 4r'-12g'-4b'+12k',
\label{eq4:ineq3}
\end{equation}
where $m'$ (resp.\ $r'$, $b'$) is the new number of edges (resp.\ reflex edges, collars), after the cut.

We have two cases. If $R$ is a collar, then $b'=b-1$, $m'=m$, and $r'=r-4$ (see Figure~\ref{fig4:1}). Plugging these into (\ref{eq4:ineq3}) and combining the result with (\ref{eq4:ineq4}) immediately yields (\ref{eq4:ineq}), as claimed.

Otherwise (i.e., $R$ is not a collar), by inspection of Figure~\ref{fig4:1}, it is clear that
\begin{equation}
m-m'\geqslant 4(r-r')-12
\label{eq4:ineq5}
\end{equation}
(recall that types (a) and (f) must be ignored, because they correspond to a collar). By combining (\ref{eq4:ineq4}), (\ref{eq4:ineq3}), (\ref{eq4:ineq5}) and plugging $b'=b$, we obtain again (\ref{eq4:ineq}), concluding the proof.
\end{proof}

\begin{theorem}\label{t4:2orthoe}
Any non-convex 2-reflex orthogonal polyhedron with $m$ edges and genus $g$ is guardable by at most
\begin{equation}
\left\lfloor \frac{m-4}{8} \right\rfloor +g
\label{e4:bound2}
\end{equation}
(open or closed) reflex edge guards.
\end{theorem}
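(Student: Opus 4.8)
The plan is to carry out exactly the reduction announced at the start of this section: combine the refined edge inequality of Lemma~\ref{t4:ineq} with the reflex-edge-guard bound of Theorem~\ref{t4:2orthor}, so that the collar parameter $b$ is eliminated. No new geometric construction is needed.

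First I would note that a non-convex orthogonal polyhedron is not a cuboid and therefore has $r>0$ reflex edges, so Theorem~\ref{t4:2orthor} applies and guards $\mathcal P$ with at most $\left\lfloor (r-g)/2\right\rfloor-b+1$ reflex edge guards, where $b$ denotes the number of collars of $\mathcal P$. Next, Lemma~\ref{t4:ineq} gives $4r\leqslant m+12g+4b-12$; subtracting $4g$ and dividing by $8$ turns this into
$$\frac{r-g}{2}\;\leqslant\;\frac{m-4}{8}+g+\frac{b-2}{2}.$$
The rest is floor-function bookkeeping: since $g$ and $b-1$ are integers and $(b-2)/2\leqslant b-1$ for every $b\geqslant 0$, applying $\lfloor\cdot\rfloor$ to both sides yields
$$\left\lfloor\frac{r-g}{2}\right\rfloor\;\leqslant\;g+\left\lfloor\frac{m-4}{8}\right\rfloor+(b-1),$$
and subtracting $b-1$ from the guard count gives precisely the bound~(\ref{e4:bound2}). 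Observe that the contribution of collars to the two estimates cancels exactly; this is the reason the parameter $b$ had to be threaded through both Lemma~\ref{t4:ineq} and Theorem~\ref{t4:2orthor}.

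I do not expect a genuine obstacle here: the substance of the theorem has been fully absorbed into Lemma~\ref{t4:ineq} (the refined edge inequality with collars) and Theorem~\ref{t4:2orthor}. The only two points deserving care are verifying that ``non-convex'' does force $r>0$, so that Theorem~\ref{t4:2orthor} is legitimately applicable, and checking that the single floor estimate above is not wasteful for small $m$, where the target bound is tightest — but the inequality $(b-2)/2\leqslant b-1$ is exactly sharp enough, with no slack lost, so the argument goes through uniformly for all $b\geqslant 0$.
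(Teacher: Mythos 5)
Your proposal is correct and follows the same route as the paper: apply Lemma~\ref{t4:ineq} to bound $r$ in terms of $m$, $g$, and $b$, plug into Theorem~\ref{t4:2orthor}, and let the collar term $-b$ absorb the $+b/2$ slack in the floor computation. The only cosmetic difference is that you bound $(b-2)/2$ by the integer $b-1$ before taking floors, whereas the paper keeps $-b/2$ inside the floor and drops it at the end; both are equivalent.
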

\begin{proof}
Let $r>0$ be the number of reflex edges in the polyhedron. By Lemma~\ref{t4:ineq},
\begin{equation}
r\leqslant \frac m4+3g+b-3,
\label{eq4:ineq2}
\end{equation}
where $b$ is the number of collars. Applying (\ref{eq4:ineq2}) to Theorem~\ref{t4:2orthor}, we obtain that the number of guards is bounded by
$$\left\lfloor \frac {r-g}2 \right\rfloor -b+1 \leqslant \left\lfloor \frac m8 +\frac 32 g+\frac b2-\frac 32 - \frac g2 \right\rfloor -b+1 = \left\lfloor \frac{m-4}{8}-\frac b2 \right\rfloor +g\leqslant \left\lfloor \frac{m-4}{8} \right\rfloor +g.$$
\end{proof}

Recall that Conjecture~\ref{con4:1} holds at least for orthogonal prisms. Now, Lemma~\ref{l4:stack} implies that it holds more generally for stacks. Specifically, we have the following:

\begin{theorem}\label{t4:stackse}
Any non-convex stack with $m$ edges and genus $g$ is guardable by at most
\begin{equation}
\left\lfloor \frac{m}{12}+\frac{g}{2} \right\rfloor
\label{eq4:stackse}
\end{equation}
(open or closed) reflex edge guards.
\end{theorem}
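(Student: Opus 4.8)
The plan is to derive the bound~(\ref{eq4:stackse}) from Lemma~\ref{l4:stack} by pinning down the exact relation between $m$ and $r$ for a stack. Since a stack has only primitive contact rectangles, it has no collars, so the parameter $b$ in Lemma~\ref{t4:ineq} is zero; but in fact much more holds, namely that for stacks the inequality of Lemma~\ref{t4:ineq} becomes an equality with a better leading constant. Concretely, I would first prove that for any collection of $k$ pairwise internally disjoint (possibly degenerate) stacks with $m$ edges, $r$ reflex edges, and total genus $g$,
\[
m = 6r - 12g + 12k .
\]
The argument is the one of Lemma~\ref{t4:ineq}, restricted to primitive contacts: induct on $r$. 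For $r = 0$ the collection is $k$ cuboids, so $m = 12k$, $g = 0$, and the identity holds. For $r > 0$, pick a contact rectangle $R$ and cut through it; $R$ is of type~(d) or~(i), so Figure~\ref{fig4:1} gives $m' = m + 6$ and $r' = r - 1$, where primed quantities refer to the collection of stacks obtained after the cut (which is again a collection of, possibly degenerate, possibly convex, stacks; see Remark~\ref{r4:degen}). Exactly as in Lemma~\ref{t4:ineq}, the cut either splits one stack in two or lowers a genus by $1$, so $k' - g' = k - g + 1$; combining this with the inductive identity $m' = 6r' - 12g' + 12k'$ gives $m = 6r - 12g + 12k$.

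Specializing to $k = 1$, every stack satisfies $m = 6r - 12g + 12$, that is, $r = m/6 + 2g - 2$; note this forces $m$ to be a multiple of $6$. A non-convex stack has $r > 0$, so Lemma~\ref{l4:stack} applies and bounds the number of (open or closed) reflex edge guards by $\lfloor (r-g)/2 \rfloor + 1$. Substituting and simplifying,
\[
\left\lfloor \frac{r-g}{2} \right\rfloor + 1 = \left\lfloor \frac{m}{12} + \frac{g}{2} - 1 \right\rfloor + 1 = \left\lfloor \frac{m}{12} + \frac{g}{2} \right\rfloor ,
\]
which is~(\ref{eq4:stackse}).

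I do not anticipate a real obstacle: the proof is a short computation feeding into the already-proved Lemma~\ref{l4:stack}. The only place calling for some care is the inductive step for the edge identity --- one has to confirm that cutting a stack along a primitive contact again yields a (possibly degenerate, disconnected, or convex) stack, so that both the inductive hypothesis and the update $k' - g' = k - g + 1$ apply, just as in the proof of Lemma~\ref{t4:ineq}. The closing floor manipulation is then immediate, using that $m/6 \in \mathbb{Z}$.
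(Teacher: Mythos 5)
Your proof is correct and follows essentially the same route as the paper: an induction on $r$ using the type-(d)/(i) data from Figure~\ref{fig4:1} to establish $m = 6r - 12g + 12$, then substitution into the bound of Lemma~\ref{l4:stack}. You simply spell out the bookkeeping (the collection-of-$k$-stacks formulation and the $k'-g'=k-g+1$ update) that the paper leaves as a ``straightforward induction.''
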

\begin{proof}
A straightforward induction on the number of reflex edges $r$, based on Figure~\ref{fig4:1}, reveals that
$$m = 6r-12g+12$$
(recall that stacks only have type-(d) and type-(i) contact rectangles). Solving for $r$ and substituting it in (\ref{eq4:ddd}) immediately yields (\ref{eq4:stackse}).
\end{proof}

The upper bound of Theorem~\ref{t4:stackse} is tight for $g=0$, as the example in Figure~\ref{fig4:3} shows.

\section{Running time}\label{s:5}

In this section, we show how to efficiently compute guard positions matching the upper bounds given in Theorems~\ref{t4:2orthor} and~\ref{t4:2orthoe}. Notice that both bounds refer to the very same construction, described in Section~\ref{s:3}. In the present section, we will merely translate such a construction into an algorithm that runs in $O(n \log n)$ time.

Of course, to reason about running time, we have to specify how polyhedra are represented as data structures. We will assume that a polyhedron is stored as the array of its vertices, together with the array of its faces. Each face is a sequence of indices into the vertex array. The outer boundary of each face will be given in counterclockwise order with respect to a normal vector pointing outward, while its holes will be given in clockwise order.

We assume that numbers expressing vertex coordinates, as well as indices into the vertex array, have constant size. Note that, in an orthogonal polyhedron, each vertex is shared by at most six faces. So, if an orthogonal polyhedron has $n$ vertices, the size of its representation is $\Theta(n)$. Since the running time of our algorithm is given in big-oh notation, we do not have to make any distinction between the number of vertices and the size of the representation of a given orthogonal polyhedron.

O'Rourke's algorithm for \emph{simple} orthogonal polygons, detailed in~\cite[Chapter~2.6]{art}, also runs in $O(n \log n)$ time. As observed by Urrutia in~\cite{urrutia2000}, it could be optimized to run in $O(n)$ time, if Chazelle's linear-time triangulation algorithm were used (see~\cite{chazelletriang}). Unfortunately, Urrutia's speedup is only applicable to orthogonal polygons without holes.

In principle, we could rephrase O'Rourke's algorithm in the language of 2-reflex orthogonal polyhedra and obtain a new $O(n \log n)$ algorithm. However, four issues arise that require additional care:
\begin{enumerate}
\item\label{i4:1} O'Rourke's algorithm works on simply connected polygons, while our algorithm should be applied to polyhedra of any genus.

\item\label{i4:2} O'Rourke's algorithm may assign guards to convex vertices, whereas we insist on having guards only on reflex edges.

\item\label{i4:3} O'Rourke's method to find horizontal cuts in polygons does not trivially extend to polyhedra.

\item\label{i4:4} O'Rourke's algorithm relies on guarding double histograms, whereas we need guard double castles.
\end{enumerate}

We will now describe our modified algorithm, showing that each of the above issues has a relatively simple solution. Our algorithm takes as input a 2-reflex orthogonal polyhedron $\mathcal P$ with $n$ vertices, and outputs the set of reflex edges to which guards are assigned.

\subsection*{Preprocessing: $O(n)$}

First of all, we do some preprocessing on $\mathcal P$ to construct \emph{adjacency tables} of faces, edges and vertices, which allow us to efficiently navigate the polyhedron's surface. We mark each edge as reflex or convex and, if needed, we turn $\mathcal P$ by $90^\circ$ so that it contains no vertical reflex edges.

\subsection*{Finding contact lines: $O(n \log n)$}

We first compute a structure that is very similar to the \emph{horizontal visibility map} (also known as \emph{trapezoidization}, see~\cite{art}) of each vertical face of $\mathcal P$. This is a well-studied 2-dimensional problem that consists in partitioning a polygon into trapezoids by drawing horizontal lines at vertices. In our case, faces are orthogonal polygons (perhaps with holes), trapezoids are actually rectangles, and cut lines are drawn at reflex vertices only. For reasons that will be explained later, these cut lines are called \emph{contact lines}.

Note that some vertical faces of $\mathcal P$ may have ``degenerate'' vertices lying on horizontal edges. For instance, in the middle picture of Figure~\ref{fig4:2} there is a polyhedron made of two bricks, the top of which has a right-facing rectangular face $F$. However, five vertices of the polyhedron lie on the perimeter of $F$: the one lying in the middle of the bottom edge of $F$ is a degenerate vertex for $F$. In our algorithm, a degenerate vertex is treated as a pair of coincident vertices, one reflex and one convex: it is as if a degenerate vertex were actually a (degenerate) vertical edge of length 0.

Let $F_i$ be a vertical face of $\mathcal P$ with $n_i$ vertices. We sort all the vertical edges of $F_i$ by the $z$ coordinate of their upper vertex (in $O(n_i \log n_i)$ time), and we ``scan'' $F_i$ from top to bottom with a sweep line. We maintain a horizontally sorted list of all the vertical edges of $F_i$ pierced by the sweep line, in which insertion and deletion take $O(\log n_i)$ time. This list also includes the degenerate edges defined above. Every time our sweep line hits a vertex $v$ that is reflex in $F_i$ or belongs to a reflex edge of $\mathcal P$, we draw a contact line from $v$ to the next edge or the previous edge in the sorted list (according as the vertical edge containing $v$ is facing right or left), we add a \emph{dummy vertex} there (unless there is already a vertex), and we proceed with our sweep.

This process takes $O(n_i \log n_i)$ time and, letting $\widetilde n=\sum_i n_i$, finding the contact lines on every vertical face of $\mathcal P$ takes $O(n \log n)$ in total, because
$$\sum_i n_i \log n_i \leqslant \sum_i n_i \log \widetilde n = \widetilde n \log \widetilde n = \Theta(n\log n).$$

Every time we find a new contact line, we also update the face, edge, and vertex data we computed in the preprocessing step. That is, as soon as a new contact line is found, the corresponding face gets a new edge and is perhaps partitioned in two coplanar faces (this step takes constant time). Moreover, whenever we create a new dummy vertex $w$ for $F_i$, we add it to the other face sharing it (say, $F_j$). If we still have to process $F_j$, we treat $w$ as a reflex vertex and draw a contact line at $w$ in $F_j$ when we process it. Otherwise, $F_j$ is now a rectangle, and we just draw an additional contact line in it at $w$, in constant time: we call this procedure \emph{extension} of the contact line. However, if the extension generates yet another dummy vertex, there is no need to perform an ``extension of the extension'' to the next face, as we will see in a moment.

\subsection*{Finding bricks and contact rectangles: $O(n)$}

Notice that the contact lines constructed in the previous step are precisely the edges of the contact rectangles of $\mathcal P$. Indeed, we did not draw contact lines only at the reflex vertices of each face, but also at the dummy vertices created while processing other faces and at convex vertices that lie on reflex edges of $\mathcal P$.

Observe that a dummy vertex may be created only on a contact rectangle of type~(c), (d), (h), or~(i) (refer to Figure~\ref{fig4:1}). Indeed, in all other cases a vertex of a contact rectangle is also a vertex of the polyhedron. Moreover, only the contact rectangles of type~(d) and type~(i) have an edge with no endpoint on a vertex of $\mathcal P$: these are the only cases where the extension procedure is really needed, and performing it once per contact rectangle is enough to generate all its edges.

It is easy now to identify all the contact rectangles and all the bricks, navigating the boundary of $\mathcal P$ using the adjacency tables that we precomputed and then updated every time a cut line was drawn. While we identify the contact rectangles, we also build a \emph{brick graph} $G$, having a node for each brick and an edge connecting each pair of bricks sharing a contact rectangle.

Observe that issue~(\ref{i4:3}) above is now solved.

\subsection*{Resolving non-primitive contact rectangles: $O(n)$}

The non-primitive contact rectangles are those that are surrounded by more than one reflex edge of $\mathcal P$ and, as such, are easy to find. As proven in Theorem~\ref{t4:2orthor}, it is safe to cut $\mathcal P$ at a non-primitive contact rectangle and place guards in the resulting polyhedra (or polyhedron of lower genus).

Instead of actually cutting $\mathcal P$ and updating all the data structures, we merely delete the edges of $G$ corresponding to all non-primitive contact rectangles.

\subsection*{Ensuring simple connectedness: $O(n)$}

By this point, $\mathcal P$ has been partitioned into several, possibly not simply connected, stacks. As proven in Theorem~\ref{l4:stack}, it is safe to further cut the stacks until they all become simply connected. To do so, we again process only $G$, turning it into a forest. Such a task is accomplished by a depth-first traversal, starting at each connected component and deleting edges leading to already visited nodes. Recall that bricks in stacks have at most four neighbors, hence the time complexity of this traversal is indeed linear.

Observe that this step also solves issue~(\ref{i4:1}) above.

\subsection*{Adjusting brick parity: $O(n)$}

For reasons that will be clear shortly, we insist on having only stacks with an even number of bricks. O'Rourke achieves this by adding an extra ``chip'' to the polygon, in case he wants to change the parity of its reflex vertices. Then he applies his algorithm to the new polygon, and later removes the chip. If the chip happens to host a guard, then that guard is reassigned to the nearest convex vertex, after the chip is removed. Observe how this choice causes issue~(\ref{i4:2}) above.

In order to avoid placing guards on convex edges, we proceed as follows. We compute the size of each connected component of $G$ by a simple traversal. Then, in each component with an odd number of nodes, we find one leaf (recall that $G$ is a forest) and delete the edge attached to it (if the component is already an isolated node, we leave it as it is). Finally, we collect each isolated node, remove it from $G$, find its corresponding brick $\mathcal B$ in $\mathcal P$, find a contact rectangle bordering $\mathcal B$ (one must exist), find one reflex edge $e$ on it and assign it a guard. Referring to Figure~\ref{fig4:primitive}, it is obvious that $e$ guards all of $\mathcal B$.

The correctness of this step follows from the remarks contained in the proof of Lemma~\ref{l4:stack}, that every contact rectangle in a stack with an odd number of bricks yields an odd cut, and that it is always safe to make odd cuts.

\subsection*{Identifying odd cuts: $O(n)$}

We are left with stacks having an even number of bricks, and we want to further partition them with odd cuts. In order to identify odd cuts, we pick each connected component of $G$ and we do a depth-first traversal, rooted anywhere. During the traversal, we compute the parity of the size of the subtree dangling from each edge of $G$ we traverse. The parity is even if and only if that edge of $G$ corresponds to an odd cut.

Now, it is easy to observe that cutting a stack having an even number of bricks at an odd cut yields two stacks that again have an even number of bricks. Additionally, this operation does not change the parity of the cuts in the two resulting stacks. Hence, there is no need to re-identify the odd cuts after a cut is made. In contrast, stacks with an odd number of bricks do not have such a property, and this motivates our previous step.

It follows that we may remove all the edges of $G$ corresponding to odd cuts, without worrying about side effects.

\subsection*{Guarding double castles: $O(n)$}

At this point, only non-convex stacks without odd cuts are left. As a consequence of the observations in Lemma~\ref{l4:stack}, these are all double castles, which we now have to guard in linear time. In contrast, O'Rourke's algorithm was left at this point with double histograms (cf.~issue~(\ref{i4:4}) above).

Our algorithm is based on the proofs of Lemmas~\ref{l4:mono},~\ref{l4:castle}, and~\ref{l4:dcastle}, which naturally yield a procedure that cuts along certain contact rectangles and selects guards in monotone orthogonal polyhedra.

The only non-trivial aspect is that, occasionally in the procedure, we need know if some castle (or upside-down castle) is a prism, and what the orientation of its reflex edges is. To efficiently answer this question, we precompute this information for every ``sub-castle'' of each double castle that we have. We identify the two castles constituting each double castle (in linear time), then we do a depth-first traversal of the subgraphs of $G$ corresponding to those castles, starting from their base bricks. When we reach an internal node $v$, we recursively check if the subtrees dangling from its two children correspond to prisms, and if their reflex edges are oriented in the same direction. Then, after inspecting also the brick corresponding to $v$, we know if the subtree dangling from it corresponds to a prism and, if so, the direction of its reflex edges. Leaves are trivial to handle, in that they always correspond to prisms with no reflex edges.\\

Summarizing, and recalling the upper bounds given in Theorems~\ref{t4:2orthor} and~\ref{t4:2orthoe}, we have hereby proved Theorem~\ref{t:main}. Observe that the only superlinear step is the vertical sweep that finds the contact lines in every vertical face of the polyhedron. Whether a more efficient algorithm exists remains an open problem.

\section{Computational complexity}\label{s:6}

Aside from the general upper bounds on guard numbers discussed in this paper, one may wonder about the computational complexity of finding the minimum number of guards for any given polyhedron.

The 2-dimensional Art Gallery problem of guarding a simple orthogonal polygon by placing a minimum number of guards in it is shown to be NP-hard in~\cite{hardorthogonal}. The problem remains NP-hard with the additional restriction that guards may only be placed at the vertices of the polygon. Moreover, by slightly adapting a reduction from Vertex Cover in~\cite{kats}, we can show that the Art Gallery problem for simple orthogonal polygons is NP-hard even if guards can only be placed at reflex vertices.

These results immediately imply the NP-hardness of some 3-dimensional Art Gallery problems. Indeed, by extruding a simple orthogonal polygon with $r$ reflex vertices, we obtain a 1-reflex orthogonal polyhedron with genus~$0$ and $r$ reflex edges. Thus, placing a minimum number of point guards in such a polyhedron is NP-hard, as well as choosing a minimum number of reflex edge guards. Clearly, the NP-hardness of such problems extends to 2-reflex orthogonal polyhedra, as well as to more general classes of polyhedra.

Whether these problems are in NP is a subtler issue. In principle, guard coordinates could be used as a certificate, but it is not obvious that their representation would have polynomial size in general. In fact, it was proved in~\cite{tillman} that the Art Gallery problem for simple polygons is complete for the existential theory of the reals ($\exists\mathbb R$), provided that guards can be located anywhere in the polygon. This likely places such an Art Gallery problem outside of NP. By the extrusion argument, we immediately have that the Art Gallery problem for point guards in a polyhedron is $\exists\mathbb R$-hard, as well.

Moreover, a technique similar to the one in~\cite[Theorem~2.1]{tillman} allows us to prove that this 3-dimensional problem is also in $\exists\mathbb R$. The basic tool is an algebraic test for whether two points lie on the same side of a plane. If the plane has equation $ax+by+cz+d=0$ and the two points have coordinates $(x_1, y_1, z_1)$ and $(x_2, y_2, z_2)$ respectively, then they are on the same side if and only if $$(ax_1+by_1+cz_1+d)(ax_2+by_2+cz_2+d)>0.$$
With this formula, we can construct a polynomial-size system of inequalities to test if a point lies inside a given polyhedron, if two points can see each other, etc. The key observation is that the region of a polyhedron that is visible to a point guard is itself a polyhedron. Therefore, given a set of point guards, it is possible to subdivide the space into a polynomial number of convex regions, each of which contains a \emph{witness point}, as outlined in~\cite{tillman} for the 2-dimensional case. Such regions are obtained by drawing all the planes determined by the faces of the polyhedron, as well as all the planes identified by a point guard and an edge of the polyhedron. Each of these regions is either entirely visible or entirely invisible to any given guard. Computing a polynomial-size witness set containing at least one point in each such region is therefore possible within the existential theory of the reals. It is then easy to test if guard locations exist such that all the witness points that are contained in the polyhedron are also guarded.

\begin{proposition}
The Art Gallery problem for point guards in polyhedra is $\exists\mathbb R$-complete.
\end{proposition}

On the other hand, if the eligible guard locations are restricted to a finite set, such as the set of vertices or edges of a polygon or polyhedron, things are rather different. In a polygon, a set of vertices or edges is a certificate that can be verified in polynomial time. Since the region that is visible to a single vertex or edge is a polygon that can be computed in polynomial time, it is easy to test whether the union of some of these regions coincides with the whole polygon. As already observed above, the same is true of vertex guards in polyhedra: the region visible to a vertex is a polyhedron computable in polynomial time, and the union of two such polyhedra is easy to compute. Also, the NP-hardness of this problem, even restricted to 1-reflex orthogonal polyhedra, follows directly from its 2-dimensional version given in~\cite{hardorthogonal} and the extrusion technique.

\begin{proposition}
The Art Gallery problems for vertex guards in 1-reflex, 2-reflex, orthogonal, and general polyhedra are NP-complete.
\end{proposition}

For edge guards in polyhedra, the analysis gets more complicated. As observed in~\cite{faceguards2}, the region visible to an edge of an orthogonal polyhedron may not be a polyhedron: in general, its boundary is a piecewise quadric surface. Hence, in order to find witness points for a given set of edge guards as we did for point guards, we would have to compute intersections of quadric surfaces. Unfortunately, it is not clear whether this method can work in general, as the computations may yield coefficients containing radicals.

More specifically, the region visible to an edge guard in a polyhedron is bounded by the faces of the polyhedron itself plus some bundles of lines passing through the edge guard and one or two reflex edges. In the latter case, the three edges involved may determine the following surfaces:
\begin{itemize}
\item[(i)] a plane, if two of the three edges are parallel to each other;
\item[(ii)] a hyperbolic paraboloid, if the three edges are parallel to a common plane;
\item[(iii)] a hyperboloid of one sheet, otherwise.
\end{itemize}
Now, if the polyhedron is 2-reflex (or 1-reflex) and the guard itself coincides with a reflex edge, then (i) is always the case, because of three reflex edges two are necessarily parallel to each other. Therefore, in this special case, the region visible to a reflex edge guard is indeed a polyhedron, and witness points for a set of reflex edge guards are easy to compute. It follows that this Art Gallery problem is in NP, and hence NP-complete.

\begin{proposition}
The Art Gallery problem for reflex edge guards in 1-reflex and 2-reflex orthogonal polyhedra is NP-complete.
\end{proposition}

Determining the complexity of the Art Gallery problem for point guards in orthogonal polyhedra ($\exists\mathbb R$-complete?)\ and for (reflex) edge guards in orthogonal polyhedra (NP-complete?)\ remains open.

\paragraph{Acknowledgments.}
The author would like to thank the anonymous reviewers for helpful comments and suggestions, which considerably improved the readability of this paper.

\end{document}